\DeclareMathAlphabet{\mathrm}    {OT1}{cmr}{m}{n}
\DeclareMathAlphabet{\mathrmbf}  {OT1}{cmr}{bx}{n}
\DeclareMathAlphabet{\mathrmit}  {OT1}{cmr}{m}{it}
\DeclareMathAlphabet{\mathrmbfit}{OT1}{cmr}{bx}{it}
\DeclareMathAlphabet{\mathsf}    {OT1}{cmss}{m}{n}
\DeclareMathAlphabet{\mathsfbf}  {OT1}{cmss}{bx}{n}
\DeclareMathAlphabet{\mathsfit}  {OT1}{cmss}{m}{sl}
\DeclareMathAlphabet{\mathttbf}  {OT1}{cmtt}{bx}{n}
\newcommand{\keywords}[1]{\par\addvspace\baselineskip\noindent\enspace\ignorespaces{\bfseries Keywords:\,}#1}
\begin{document}

\pagestyle{headings}
\title{Database Semantics}
\author{Robert E. Kent}
\institute{Ontologos}
\maketitle

\begin{abstract}
This paper,
the first step to connect relational databases with systems consequence (Kent~\cite{kent:iccs2009}), 
is concerned with the semantics of relational databases. 
It aims to to study system consequence in the logical/semantic system of relational databases. 
The paper, 
which was inspired by and which extends a recent set of papers on the theory of relational database systems 
(Spivak~\cite{spivak:sd}~\cite{spivak:fdm}), 
is linked with work on the Information Flow Framework (IFF~\cite{iff}) connected with the ontology standards effort (SUO), 
since relational databases naturally embed into first order logic. 
The database semantics discussed here is concerned with the conceptual level of database architecture.
We offer both an intuitive and technical discussion. 
Corresponding to the notions of primary and foreign keys, 
relational database semantics takes two forms: 
a distinguished form where entities are distinguished from relations, and 
a unified form where relations and entities coincide. 
The distinguished form corresponds to the theory presented in (Spivak~\cite{spivak:sd}). 
The unified form, 
a special case of the distinguished form, 
corresponds to the theory presented in (Spivak~\cite{spivak:fdm}). 
A later paper will discuss various formalisms of relational databases, 
such as relational algebra and first order logic,
and will complete the description of the relational database logical environment.
\keywords{database systems, database schemas, relational tables, primary and foreign keys, morphisms of databases, relational algebra, first order logic, system consequence.}
\end{abstract}
%


\section{Introduction}\label{introduction}

The author's ``Systems Consequence'' paper (Kent~\cite{kent:iccs2009}) 
is a very general theory and methodology for specification and inter-operation of systems of information resources. 
The generality comes from the fact that it is independent of the logical/semantic system (institution) being used. 
This is a wide-ranging theory, 
based upon ideas from 
information flow (Barwise and Seligman~\cite{barwise:seligman:97}), 
formal concept analysis (Wille and Ganter et al~\cite{ganter:wille:99}), 
the theory of institutions (Goguen et al~\cite{goguen:burstall:92}), and 
the lattice of theories notion (Sowa), 
for the integration of both formal and semantic systems independent of logical environment. 
In order to better understand the motivations of that paper and to be able more readily to apply its concepts, 
in the future it will be important to study system consequence in various particular logical/semantic systems. 
This paper aims to do just that for the logical/semantic system of relational databases. 
The paper, 
which was inspired by and which extends a recent set of papers on the theory of relational database systems 
(Spivak~\cite{spivak:sd},\cite{spivak:fdm}), 
is linked with work on the Information Flow Framework (IFF~\cite{iff}) connected with the ontology standards effort (SUO), 
since relational databases naturally embed into first order logic. 
We offer both an intuitive and technical discussion. 
Corresponding to the notions of primary and foreign keys, 
relational database semantics takes two forms: 
a distinguished form where entities are distinguished from relations, and 
a unified form where relations and entities coincide. 
The distinguished form corresponds to the theory presented in the paper (Spivak~\cite{spivak:sd}). 
We extend Spivak's treatment of tables 
from the static case of a single entity classification (type specification) 
to the dynamic case of classifications varying along infomorphisms. 
Our treatment of relational databases as diagrams of tables differs from 
Spivak's sheaf theory of databases. 
The unified form, 
a special case of the distinguished form, 
corresponds to the theory presented in the paper (Spivak~\cite{spivak:fdm})). 
The unified form has a graphical presentation, 
which corresponds to the sketch theory of databases (Johnson and Rosebrugh~\cite{johnson:rosebrugh:07}) 
and the resource description framework (RDF). 
This paper, 
which is the first step to connect relational databases with systems consequence, 
is concerned with the semantics of relational databases. 
A later paper will discuss various formalisms of relational databases, 
such as relational algebra and first order logic.
Section~\ref{relational:data:model} discusses the relational data model.
Section~\ref{tables} describes our representation for the table concept, 
both defining a category of tables,
and proving that this category is complete (joins exist) and cocomplete (unions exist).
Section~\ref{relational:databases} represents the relational database concept
as a diagram of tables linked by the generalization-specialization of projections.
Morphisms of relational databases are defined.
Canonical examples of both are discussed.
Finally,
section~\ref{summary:future:work} summarizes the results and gives some concluding remarks.

\section{Relational Data Model}\label{relational:data:model}

The paper defines an architectural semantics for the relational data model.
\footnote{Older architectures of data include the hierarchical model and network model.
Of these, nothing will be said.
A newer architecture of data, 
called the object-relation-object model, 
is a presentation form for the relational data model described here.}
All information in the relational model is represented within relations.
A relational database is a collection of relations (relational tables, or just tables). 
A table is represented as an array, organized into rows and columns. 
The rows are called the tuples (records) of the table, whereas
the columns are called the attributes of the table.
Both rows (tuples) and columns are unordered.
In the basic relational data model all the components can be resolved into sets and functions.
~\footnote{The basic relational data model is defined on the category $\mathrmbf{Set}$ of sets and functions.}

The basic relational building block is the data domain represented by an entity type $x \in X$,
where $X$ is the type set of an entity classification 
$\mathcal{E} = {\langle{X,Y,\models_{\mathcal{E}}}\rangle}$, 
whose instance set is a universe of data values $Y$ local to the database.
An entity instance $y \in Y$ is classified by an entity type $x \in X$ when $y \models_{\mathcal{E}} x$.
Within the classification $\mathcal{E}$ the entity type $x \in X$ represents its extent,
which is the domain of data values
$\mathrmbfit{ext}_{\mathcal{E}}(x) = \{ y \in Y \mid y \models_{\mathcal{E}} x \}$.
We extend the classification to generalized elements.
An indexed collection of entity types     $\{ (i,s_{i}) \mid i \in I, s_{i} \in X \}$ is called an $\mathcal{E}$-signature.
It is denoted by the pair ${\langle{I,s}\rangle}$ 
and represented as a map $I \xrightarrow{s} X$ from index set to entity type set.
An indexed collection of entity instances $\{ (j,t_{j}) \mid j \in J, t_{j} \in Y \}$ is called an $\mathcal{E}$-tuple.
A tuple represents an object; 
either a concrete, physical object or an abstract, conceptual object. 
It is denoted by the pair ${\langle{J,t}\rangle}$ 
and represented as a map $J \xrightarrow{t} Y$ from index set to the universe.
The indexing set is called the arity of the signature or tuple.
A $\mathcal{E}$-tuple ${\langle{J,t}\rangle}$ is classified by an $\mathcal{E}$-signature ${\langle{I,s}\rangle}$,
denote by $t \models_{\mathcal{E}} s$,
when they have the same arity $J = I$ 
and enjoy pointwise classification $t_{i} \models_{\mathcal{E}} s_{i}$ for all $i \in I$.
The extent of an $\mathcal{E}$-signature ${\langle{I,s}\rangle}$ is its tuple set 
$\mathrmbfit{tup}_{\mathcal{E}}(I,s) = \{ t \mid t \models_{\mathcal{E}} s \}$.

Let $\mathcal{T}$ be a relational table in a database based on the entity classification $\mathcal{E}$. 
An attribute of $\mathcal{T}$
is an ordered pair $(i,s_{i})$ 
consisting of an attribute name $i \in I$ and an entity type $s_{i} \in X$,
where $I$ is the arity of the table. 
The collection of attributes of $\mathcal{T}$ forms its schema ${\langle{I,s,X}\rangle}$, 
where ${\langle{I,s}\rangle}$ is an $\mathcal{E}$-signature.
A tuple of $\mathcal{T}$ is an $\mathcal{E}$-tuple
that is classified by the table signature ${\langle{I,s}\rangle}$.
Hence,
the tuple set of $\mathcal{T}$ is the set $\mathrmbfit{tup}_{\mathcal{E}}(I,s)$.
Each tuple of $\mathcal{T}$ must be uniquely identifiable by some combination (one or more) of its attribute values. 
This combination is referred to as the primary key. 
Without loss of generality, we assume that (primary) keys are single attributes.
In addition,
we conceptually separate the primary key attribute from the rest of the table and use it for indexing.
Hence, the table $\mathcal{T}$ is an indexed collection of $\mathcal{E}$-tuples 
$\mathcal{T} = \{ (k,\tau_{k}) \mid \tau_{k} \in \mathrmbfit{tup}_{\mathcal{E}}(I,s), k \in K \}$,
where $K$ is the set of primary keys of the table; 
that is, the table is represented as a map
$K \xrightarrow{\tau} \mathrmbfit{tup}_{\mathcal{E}}(I,s)$
from keys to tuples.

Here is an small example of a relational database for a company in unified form, 
which illustrates both primary keys ($\blacktriangle$) and foreign keys ($\vartriangle$).
It contains two relational tables,
an employee table {\bf{Emp}} and a department table {\bf{Dept}},
which are indexed by primary keys and linked by foreign keys.
\begin{center}
\vspace{-5pt}
\begin{tabular}{@{\hspace{-5pt}}c@{\hspace{12pt}}c}
\renewcommand{\tabcolsep}{6pt}
\tiny
\setlength{\extrarowheight}{2pt}
\begin{tabular}[t]{|l||l|l|l|} 								\hline
\textit{emp}:{\bf{Emp}} & \textit{name}:{\bf{Str}}	& \textit{addr}:{\bf{Str}} & \textit{dept}:$\mathbf{\dot{Dept}}$	\\ \hline
e1			& Plato			&	Greece	& $\dot{d1}$ 	\\ \hline
e2			& Aquinus		&	Italy		& $\dot{d2}$ 	\\ \hline
e3			& Decartes	&	France	& $\dot{d1}$ 	\\ \hline
\multicolumn{1}{l}{$\;\;\blacktriangle$} & \multicolumn{2}{c}{} & \multicolumn{1}{l}{$\;\;\vartriangle$} \\
\end{tabular}
&
\renewcommand{\tabcolsep}{6pt}
\tiny
\setlength{\extrarowheight}{2pt}
\begin{tabular}[t]{|l||l|l|}					\hline
\textit{dept}:{\bf{Dept}} & \textit{name}:{\bf{Str}} & \textit{mngr}:$\mathbf{\dot{Emp}}$ 
\\ \hline
d1			& Sales				& $\dot{e3}$ \\ \hline
d2			& Production	& $\dot{e2}$ \\ \hline
\multicolumn{1}{l}{$\;\;\blacktriangle$} & \multicolumn{1}{c}{} & \multicolumn{1}{l}{$\;\;\vartriangle$} 
\end{tabular}
\end{tabular}
\end{center}
In this example,
the entity (relation) types are {\bf{Dept}}, {\bf{Emp}} and {\bf{Str}}.
In the employee relational table {\bf{Emp}},
the arity is $\{ \textit{name}, \textit{addr}, \textit{dept} \}$,
the signature is $\{ (\textit{name},{\bf{Str}}), (\textit{addr},{\bf{Str}}), (\textit{dept},{\bf{Dept}}) \}$, and
the (primary) key set is $\{ e1, e2, e3 \}$.
Dotted items indicate relations (types or instances) being used as entities, since this is in unified form.

In the relational data model, there are three inherent integrity constraints: 
entity integrity, domain integrity, and referential integrity.
Entity integrity asserts that every table must have a primary key column in which each entry identifies its own row (tuple).
Domain integrity asserts that each data entry in a column must be of the type of that column. 
Entity and domain integrity are requirements for the distinguished form of database semantics.
Entity integrity says there must be a tuple function from the set of (primary) keys, and
domain integrity says that image tuples must be classified by the table signature ${\langle{I,s}\rangle}$.
Hence, entity and domain integrity assert the existence of the tuple or content function
$t : K \rightarrow \mathrmbfit{tup}_{\mathcal{E}}(I,s)$.
Referential integrity asserts that each entry in a foreign key column of a referencing table 
must occur in the primary key column of the referenced table.
Referential integrity is a requirement for the unified form of database semantics.
Referential integrity says there must be a function 
from a foreign key column of a referencing table 
to the primary key column of the referenced table.
Hence, 
referential integrity asserts the existence of the functions in the sketch interpretation of a relational database.

The information in a database is accessed by specifying queries, 
which use operations such as 
select to identify tuples, 
project to identify attributes, and 
join to combine tables. 
In this paper,
projection refers to a primitive generalization-specialization operation between pairs of relational tables
(they are specified by the database schema, project from joined table to components, or other),
whereas join is a composite operation on a linked collection of tables.
Selection is a special case of join,
which uses reference relations (tables).

\section{Tables}\label{tables}

A \emph{table} (database relation) 
$\mathcal{T} = {\langle{\mathcal{S},\mathcal{E},K,t}\rangle}$
has an underlying (simple) schema 
$\mathcal{S} = {\langle{X,I,s}\rangle}$ 
with
a set of entity types $X$ and an $X$-signature ${\langle{I,s}\rangle} \in (\mathrmbf{Set}{\downarrow}X)$,
an entity classification $\mathcal{E} = {\langle{X,Y,\models_{\mathcal{E}}}\rangle}$ 
with a common (entity) type set component $X \in \mathrmbf{Set}$ 
and a local universe of entity instances $Y \in \mathrmbf{Set}$,
a set $K$ of (primary) keys, and
a tuple function $t : K \rightarrow \mathrmbfit{tup}_{\mathcal{E}}(I,s)$
mapping keys to $\mathcal{E}$-tuples of type (signature) ${\langle{I,s}\rangle}$.
Equivalently,
it is an object in the the comma category of $\mathcal{E}$-tables 
$\mathcal{T} \in (\mathrmbf{Set}{\downarrow}\mathrmbfit{tup}_{\mathcal{E}})$.

A \emph{table morphism} (morphism of database relations)
${\langle{h,f,g,k}\rangle} :
\mathcal{T}_{1} = {\langle{\mathcal{S}_{1},\mathcal{E}_{1},K_{1},t_{1}}\rangle} \rightarrow 
{\langle{\mathcal{S}_{2},\mathcal{E}_{2},K_{2},t_{2}}\rangle} = \mathcal{T}_{2}$
consists of 
a (simple) schema morphism
${\langle{h,f}\rangle} : 
\mathcal{S}_{2} = {\langle{X_{2},I_{2},s_{2}}\rangle} \rightarrow 
{\langle{X_{1},I_{1},s_{1}}\rangle} = \mathcal{S}_{1}$
with
a function on entity types $f : X_{2} \rightarrow X_{1}$
and
an $X_{1}$-signature morphism
$h : {\scriptstyle\sum}_{f}(I_{2},s_{2}) = {\langle{I_{2},s_{2}{\cdot}f}\rangle} \rightarrow {\langle{I_{1},s_{1}}\rangle}$, 
an entity infomorphism 
${\langle{f,g}\rangle} : 
\mathcal{E}_{2} = {\langle{X_{2},Y_{2},\models_{\mathcal{E}_{2}}}\rangle} \rightleftarrows 
{\langle{X_{1},Y_{1},\models_{\mathcal{E}_{1}}}\rangle} = \mathcal{E}_{1}$
with a common (entity) type function component $f : X_{2} \rightarrow X_{1}$
and a universe (entity instance) function $g : X_{1} \rightarrow X_{2}$, and
a key function
$k : K_{1} \rightarrow K_{2}$,
which satisfy the condition
$k \cdot t_{2} = t_{1} \cdot \mathrmbfit{tup}(h,f,g)$,
where
$\mathrmbfit{tup}(h,f,g) 
\doteq \mathrmbfit{tup}_{\mathcal{E}_{1}}(h) \cdot \tau_{{\langle{f,g}\rangle}}(I_{2},s_{2})
= (h{\cdot}{(\mbox{-})})\cdot({(\mbox{-})}{\cdot}g) :
\mathrmbfit{tup}_{\mathcal{E}_{1}}(I_{1},s_{1}) \rightarrow \mathrmbfit{tup}_{\mathcal{E}_{2}}(I_{2},s_{2})$.~\footnote{Since 
the table tuple function 
embodies the entity/domain integrity constraints (Section~\ref{relational:data:model}),
this condition on morphisms asserts the preservation of data integrity.}
Table morphisms are illustrated in Figure~\ref{table:morphism}.
Here we see that table morphisms have the pleasing property that
corresponding entries in the source and target tables
satisfy the infomorphism condition from the theory of information flow
(Barwise and Seligman~\cite{barwise:seligman:97}).
%
%
\begin{figure}
\begin{center}
\begin{tabular}{@{\hspace{20pt}}c@{\hspace{45pt}}l}
\begin{tabular}{c}
\setlength{\unitlength}{0.5pt}
\begin{picture}(120,80)(0,0)
\put(0,80){\makebox(0,0){\footnotesize{$K_{2}$}}}
\put(120,80){\makebox(0,0){\footnotesize{$K_{1}$}}}
\put(-35,0){\makebox(0,0){\footnotesize{$\mathrmbfit{tup}_{\mathcal{E}_{2}}(I_{2},s_{2})$}}}
\put(150,0){\makebox(0,0){\footnotesize{$\mathrmbfit{tup}_{\mathcal{E}_{1}}(I_{1},s_{1})$}}}
\put(60,-40){\makebox(0,0){\footnotesize{$\mathrmbfit{tup}_{\mathcal{E}_{1}}(I_{2},s_{2}{\cdot}f)$}}}
\put(60,95){\makebox(0,0){\scriptsize{$k$}}}
\put(60,15){\makebox(0,0){\scriptsize{$\mathrmbfit{tup}(h,f,g)$}}}
\put(-8,40){\makebox(0,0)[r]{\scriptsize{$t_{2}$}}}
\put(130,40){\makebox(0,0)[l]{\scriptsize{$t_{1}$}}}
\put(105,-22){\makebox(0,0)[l]{\scriptsize{$\mathrmbfit{tup}_{\mathcal{E}_{1}}(h)$}}} 
\put(15,-22){\makebox(0,0)[r]{\scriptsize{$\tau_{{\langle{f,g}\rangle}}(I_{2},s_{2})$}}}
\put(0,65){\vector(0,-1){50}}
\put(120,65){\vector(0,-1){50}}
\put(95,80){\vector(-1,0){70}}
\put(95,0){\vector(-1,0){70}}
\put(45,-30){\vector(-3,2){30}}
\put(105,-10){\vector(-3,-2){30}}
\end{picture}
\end{tabular}
&
{\scriptsize\begin{tabular}{r@{\hspace{5pt}}l}
\multicolumn{2}{l}{
$k \cdot t_{2} = t_{1} \cdot (h{\cdot}{(\mbox{-})})\cdot({(\mbox{-})}{\cdot}g)$
and $s_{2} \cdot f = h \cdot s_{1}$}
\\
for all &
$k_{1} \in K_{1}, i_{2} \in I_{2}$
\\
let &
$k_{2} = k(k_{1}) \in K_{1}, i_{1} = h(i_{2}) \in I_{1}$ 
\\
then &
${t_{2}}_{k_{2}} = h \cdot {t_{1}}_{k_{1}} \cdot g$,
$f(s_{2}(i_{2})) = s_{1}(i_{1})$ and
${t_{2}}_{k_{2},i_{2}} = g({t_{1}}_{k_{1},i_{1}})$
\\
hence &
${t_{2}}_{k_{2},i_{2}} \models_{\mathcal{E}_{2}} s_{2}(i_{2})$
iff
${t_{1}}_{k_{1},i_{1}} \models_{\mathcal{E}_{1}} s_{1}(i_{1})$
\end{tabular}}
\\ & \\ & \\ & \\ & \\
\begin{tabular}{c}
\setlength{\unitlength}{0.347pt}
\begin{picture}(180,180)(0,0)
%
\put(0,180){\makebox(0,0){\footnotesize{$I_{2}$}}}
\put(180,180){\makebox(0,0){\footnotesize{$I_{1}$}}}
\put(0,100){\makebox(0,0){\footnotesize{$X_{2}$}}}
\put(180,100){\makebox(0,0){\footnotesize{$X_{1}$}}}
\put(0,0){\makebox(0,0){\footnotesize{$Y_{2}$}}}
\put(180,0){\makebox(0,0){\footnotesize{$Y_{1}$}}}
\put(45,145){\makebox(0,0){\footnotesize{$\hat{I}_{1}$}}}
\qbezier[40](45,125)(40,70)(15,10)\put(15,10){\vector(-1,-3){0}}
\put(40,65){\makebox(0,0)[l]{\scriptsize{${\widehat{(\mbox{-})}}$}}}
\put(-6,140){\makebox(0,0)[r]{\scriptsize{$s_{2}$}}}
\put(22,116){\makebox(0,0)[l]{\scriptsize{$\hat{s}_{1}$}}}
\put(188,140){\makebox(0,0)[l]{\scriptsize{$s_{1}$}}}
\put(-8,50){\makebox(0,0)[r]{\scriptsize{$\models_{2}$}}}
\put(188,50){\makebox(0,0)[l]{\scriptsize{$\models_{1}$}}}
\put(90,194){\makebox(0,0){\scriptsize{$h = \widehat{h}{\cdot}\varepsilon_{f}$}}}
\put(82,166){\makebox(0,0){\scriptsize{$\varepsilon_{f}$}}}
\put(25,168){\makebox(0,0)[l]{\scriptsize{$\widehat{h}$}}}
\put(90,114){\makebox(0,0){\scriptsize{$f$}}}
\put(90,14){\makebox(0,0){\scriptsize{$g$}}}
\put(-48,90){\makebox(0,0)[r]{\scriptsize{${t_{2}}_{k_{2}}$}}}
\put(231,90){\makebox(0,0)[l]{\scriptsize{${t_{1}}_{k_{1}}$}}}
\put(20,180){\vector(1,0){140}}
\put(20,100){\vector(1,0){140}}
\put(160,0){\vector(-1,0){140}}
\put(0,170){\vector(0,-1){60}}
\put(0,90){\vector(0,-1){80}}
\put(180,170){\vector(0,-1){60}}
\put(180,90){\vector(0,-1){80}}
\put(210,90){\oval(30,180)[r]}\put(204,0){\vector(-1,0){0}}
\put(-30,90){\oval(30,180)[l]}\put(-24,0){\vector(1,0){0}}
\put(57,147){\vector(4,1){100}}
\qbezier(10,110)(20,120)(30,130)\put(10,110){\vector(-1,-1){0}}
\qbezier(10,170)(20,160)(30,150)\put(30,150){\vector(1,-1){0}}
\qbezier(55,130)(60,130)(65,130)
\qbezier(65,140)(65,135)(65,130)
\end{picture}
\end{tabular}
&
\begin{tabular}{c}
\setlength{\unitlength}{0.45pt}
\begin{picture}(125,120)(-60,0)
\put(0,0){\begin{picture}(0,0)(0,0)
\put(-10,90){\makebox(0,0){\scriptsize{$\mathcal{T}_{2}$}}}
\put(60,120){\makebox(0,0){\shortstack{\scriptsize{$I_{2}$}\\$\overbrace{\rule{50pt}{0pt}}$}}}
\put(-55,40){\makebox(0,0){\scriptsize{$K_{2}\left\{\rule{0pt}{20pt}\right.$}}}
\put(60,90){\makebox(0,0){\scriptsize{$i_{2}$}}}
\put(-10,40){\makebox(0,0){\scriptsize{$k_{2}$}}}
\put(60,40){\makebox(0,0){\tiny{${t_{2}}_{k_{2},i_{2}}$}}}
\put(-20,78){\line(1,0){140}}
\put(2,0){\line(0,1){100}}
\put(185,170){\makebox(0,0){\scriptsize{$h$}}}
\qbezier(80,140)(185,180)(290,140)\put(290,140){\vector(4,-1){0}}
\end{picture}}
\put(245,0){\begin{picture}(0,0)(0,0)
\put(-10,90){\makebox(0,0){\scriptsize{$\mathcal{T}_{1}$}}}
\put(60,120){\makebox(0,0){\shortstack{\scriptsize{$I_{1}$}\\$\overbrace{\rule{50pt}{0pt}}$}}}
\put(-55,40){\makebox(0,0){\scriptsize{$K_{1}\left\{\rule{0pt}{20pt}\right.$}}}
\put(60,90){\makebox(0,0){\scriptsize{$i_{1}$}}}
\put(-10,40){\makebox(0,0){\scriptsize{$k_{1}$}}}
\put(60,40){\makebox(0,0){\tiny{${t_{1}}_{k_{1},i_{1}}$}}}
\put(-20,78){\line(1,0){140}}
\put(2,0){\line(0,1){100}}
\put(-225,-20){\makebox(0,0){\scriptsize{$k$}}}
\qbezier(-90,25)(-300,-40)(-310,20)\put(-310,20){\vector(-1,3){0}}
\end{picture}}
\end{picture}
\end{tabular}
\\ & \\ & \\
\multicolumn{2}{c}{{\scriptsize$k \cdot t_{2} 
= t_{1} \cdot \mathrmbfit{tup}(h,f,g)
= t_{1} \cdot \mathrmbfit{tup}_{\mathcal{E}_{1}}(h) \cdot \tau_{{\langle{f,g}\rangle}}(I_{2},s_{2})$}}
\\ & \\
\multicolumn{2}{c}{{\scriptsize\begin{tabular}{p{268pt}}
This four-part figure illustrates the condition on table morphisms.
It has been annotated to help guide the understanding.
The condition is symbolically stated in terms of set functions in the line of text just above.
The top left diagram illustrates the condition,
and the bottom left diagram expands on this.
The top right diagram text is more detailed in terms of a source key $k_{1} \in K_{1}$.
Here we see appearance of the infomorphism condition 
\[
g({t_{1}}_{k_{1},i_{1}}) \models_{\mathcal{E}_{2}} s_{2}(i_{2}) 
\;\text{iff}\;
{t_{1}}_{k_{1},i_{1}} \models_{\mathcal{E}_{1}} f(s_{2}(i_{2})).
\]
Finally,
the bottom left figure illustrates the effect of the morphism on source/target tables
$\mathcal{T}_{1}$ and $\mathcal{T}_{2}$.
\end{tabular}}}
\end{tabular}
\end{center}
\caption{Table Morphism}
\label{table:morphism}
\end{figure}
Composition of morphisms is defined component-wise.
Let $\mathrmbf{Tbl}$ denote the category of tables (database relations) 
with the two projections ($\mathrmbfit{sch}$ is called the schema functor)
$\mathrmbf{Sch} \xleftarrow{\mathrmbfit{sch}} \mathrmbf{Tbl}^{\mathrm{op}}\xrightarrow{\mathrmbfit{cls}} \mathrmbf{Cls}$
and the key functor
$\mathrmbf{Tbl} \xrightarrow{\mathrmbfit{key}} \mathrmbf{Set}$.
\begin{center}
\begin{tabular}{c}
\\ \\
\setlength{\unitlength}{0.55pt}
\begin{picture}(320,80)(0,-10)
\put(3,80){\makebox(0,0){\footnotesize{$\mathrmbf{Set}^{\mathrm{op}}$}}}
\put(143,80){\makebox(0,0){\footnotesize{$\mathrmbf{Tbl}^{\mathrm{op}}$}}}
\put(280,80){\makebox(0,0){\footnotesize{$\mathrmbf{Cls}$}}}
\put(0,0){\makebox(0,0){\footnotesize{$\mathrmbf{1}$}}}
\put(140,0){\makebox(0,0){\footnotesize{$\mathrmbf{Sch}$}}}
\put(280,0){\makebox(0,0){\footnotesize{$\mathrmbf{Set}$}}}
\put(210,40){\makebox(0,0){\footnotesize{$\mathrmbf{Sdsgn}$}}}
\put(80,92){\makebox(0,0){\scriptsize{$\mathrmbfit{key}^{\mathrm{op}}$}}}
\put(210,92){\makebox(0,0){\scriptsize{$\mathrmbfit{cls}$}}}
\put(210,-12){\makebox(0,0){\scriptsize{$\mathrmbfit{typ}$}}}
\put(132,40){\makebox(0,0)[r]{\scriptsize{$\mathrmbfit{sch}$}}}
\put(288,40){\makebox(0,0)[l]{\scriptsize{$\mathrmbfit{typ}$}}}
\put(140,115){\makebox(0,0){\tiny{${\langle{\mathcal{S}_{2},\mathcal{E}_{2},K_{2},t_{2}}\rangle} 
\xleftarrow{{\langle{h,f,g,k}\rangle}}{\langle{\mathcal{S}_{1},\mathcal{E}_{1},K_{1},t_{1}}\rangle}$}}}
\put(-60,80){\makebox(0,0)[r]{\tiny{$K_{2}\xleftarrow{k}K_{1}$}}}
\put(310,80){\makebox(0,0)[l]{\tiny{${\langle{X_{2},Y_{2},\models_{\mathcal{E}_{2}}}\rangle} \xrightarrow{{\langle{f,g}\rangle}}{\langle{X_{1},Y_{1},\models_{\mathcal{E}_{1}}}\rangle}$}}}
\put(140,-25){\makebox(0,0){\tiny{${\langle{X_{2},I_{2},s_{2}}\rangle}
\xrightarrow{{\langle{h,f}\rangle}}{\langle{X_{1},I_{1},s_{1}}\rangle}$}}}
\put(360,0){\makebox(0,0)[l]{\tiny{$X_{2}\xrightarrow{f}X_{1}$}}}
\put(115,80){\vector(-1,0){90}}
\put(165,80){\vector(1,0){90}}
\put(115,0){\vector(-1,0){100}}
\put(165,0){\vector(1,0){90}}
\put(0,65){\vector(0,-1){50}}
\put(140,65){\vector(0,-1){50}}
\put(280,65){\vector(0,-1){50}}
\put(155,70){\vector(2,-1){36}}
\put(190,30){\vector(-2,-1){36}}
\put(230,50){\vector(2,1){36}}
\qbezier(230,20)(235,20)(240,20)
\qbezier(240,20)(240,25)(240,30)
\end{picture}
\\ \\
\end{tabular}
\end{center}
We can have three indexing categories for tables:
classifications, schema or semidesignations.
Each has their uses:
classification indexing proves that the category of tables is complete
(and the fibers help explain database fibers),
semidesignation indexing proves that the category of tables is cocomplete, and
schema indexing follows the true formal-semantics distinction.

\subsection{Classification Indexed Category.}

For any fixed classification $\mathcal{E}$,
the $\mathcal{E}^{\mathrm{th}}$-fiber category with respect to the classification functor
$\mathrmbf{Tbl}^{\mathrm{op}} \xrightarrow{\mathrmbfit{cls}} \mathrmbf{Cls}$,
called the category of $\mathcal{E}$-tables,
is the comma category 
associated with the tuple functor
$\mathrmbfit{tup}_{\mathcal{E}} : (\mathrmbf{Set}{\downarrow}X)^{\mathrm{op}} \rightarrow \mathrmbf{Set}$.
{\footnotesize\[
\mathrmbf{Set} \xleftarrow{\mathrmbfit{key}_{\mathcal{E}}}\mathrmbf{Tbl}(\mathcal{E}) = {(\mathrmbf{Set}{\downarrow}\mathrmbfit{tup}_{\mathcal{E}})}\xrightarrow{\mathrmbfit{sign}_{\mathcal{E}}^{\mathrm{op}}} {(\mathrmbf{Set}{\downarrow}X)}^{\mathrm{op}}.
\]\normalsize}
\hspace{-6pt}
It has key and signature projection functors, 
an equivalent natural transformation
$\tau : 
\mathrmbfit{key}_{\mathcal{E}} \Rightarrow 
\mathrmbfit{sign}^{\mathrm{op}}_{\mathcal{E}} \circ \mathrmbfit{tup}_{\mathcal{E}}$, and
is described as follows.
A \emph{fiber object} $\mathcal{T} \in \mathrmbf{Tbl}(\mathcal{E})$,
or an $\mathcal{E}$-table (database $\mathcal{E}$-relation),
is any table $\mathcal{T} \in \mathrmbf{Tbl}$ with entity classification $\mathrmbfit{cls}(\mathcal{T}) = \mathcal{E}$
and tuple (content) function $t : K \rightarrow \mathrmbfit{tup}_{\mathcal{E}}(I,s)$
mapping each key (abstract tuple) to a (concrete) $\mathcal{E}$-tuple in the extent of ${\langle{I,s}\rangle}$.
A \emph{fiber morphism} in $\mathrmbf{Tbl}(\mathcal{E})$ 
is any table morphism
${\langle{h,k}\rangle} :
\mathcal{T} = {\langle{\mathcal{S},\mathcal{E},K,t}\rangle} \leftarrow
{\langle{\widetilde{\mathcal{S}},\mathcal{E},\widetilde{K},\widetilde{t}}\rangle} = \widetilde{\mathcal{T}}$
in $\mathrmbf{Tbl}$ 
with identity infomorphism 
$\mathrmit{id}_{\mathcal{E}} = {\langle{\mathrmit{id}_{X},\mathrmit{id}_{Y}}\rangle}$.
It consists of  
a signature morphism $h : {\langle{I,s}\rangle} \rightarrow {\langle{\widetilde{I},\widetilde{s}}\rangle}$ and
a key function $k : \widetilde{K} \rightarrow K$,
which satisfy the condition
$k \cdot t = \widetilde{t} \cdot \mathrmbfit{tup}_{\mathcal{E}}(h)$.
\begin{proposition}
There is an indexed category of tables
$\mathrmbfit{tbl} : \mathrmbf{Cls}^{\mathrm{op}} \rightarrow \mathrmbf{Cat}$  
from (the opposite of) the category of classifications and infomorphisms
to the category of categories and functors.~\footnote{The table indexing functor is the composition
$\mathrmbfit{tbl} = \mathrmbfit{tup}^{\mathrm{op}} \circ \mathrmbfit{comma} \circ {(\mbox{-})}^{\mathrm{op}}:
\mathrmbf{Cls}^{\mathrm{op}} 
\rightarrow \left(\mathrmbf{Adj}{\,\Downarrow\,}\mathrmbf{Set}\right)^{\mathrm{op}} 
\rightarrow \mathrmbf{Cat} \rightarrow \mathrmbf{Cat}$
of a tuple functor
$\mathrmbf{Cls} \xrightarrow{\mathrmbfit{tup}} \left(\mathrmbf{Adj}{\,\Downarrow\,}\mathrmbf{Set}\right)$
and a comma category functor
$\left(\mathrmbf{Adj}{\,\Downarrow\,}\mathrmbf{Set}\right)^{\mathrm{op}} \xrightarrow{\mathrmbfit{comma}} \mathrmbf{Cat}$.}
The (opposite of the) fibered category corresponding to this (its Grothendieck construction)
is isomorphic to the category of tables with the classification functor projection
$\mathrmbf{Tbl}^{\mathrm{op}} \xrightarrow{\mathrmbfit{cls}} \mathrmbf{Cls}$.
\end{proposition}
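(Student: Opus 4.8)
\medskip\noindent\textit{Proof sketch.}\quad
The plan is to follow the decomposition flagged in the footnote, which already isolates where the work lies. There are two assertions: that $\mathrmbfit{tbl}$ is a well-defined strict functor $\mathrmbf{Cls}^{\mathrm{op}}\to\mathrmbf{Cat}$, and that its Grothendieck construction, read on the appropriate opposite, is isomorphic over $\mathrmbf{Cls}$ to $\mathrmbf{Tbl}^{\mathrm{op}}\xrightarrow{\mathrmbfit{cls}}\mathrmbf{Cls}$. For the first, the only substantive ingredient is the tuple functor $\mathrmbf{Cls}\xrightarrow{\mathrmbfit{tup}}(\mathrmbf{Adj}{\,\Downarrow\,}\mathrmbf{Set})$, since $\mathrmbfit{comma}$ is just the (2-)functoriality of the comma construction and ${(\mbox{-})}^{\mathrm{op}}$ is harmless. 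For the second, the crucial fact -- that every table morphism factors canonically into a change-of-classification part followed by a fiber part -- is essentially delivered by the identity $\mathrmbfit{tup}(h,f,g)=\mathrmbfit{tup}_{\mathcal{E}_{1}}(h)\cdot\tau_{{\langle{f,g}\rangle}}(I_{2},s_{2})$ recorded above.

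First I would spell out $\mathrmbfit{tup}$. On objects it sends $\mathcal{E}={\langle{X,Y,\models_{\mathcal{E}}}\rangle}$ to the functor $\mathrmbfit{tup}_{\mathcal{E}}\colon(\mathrmbf{Set}{\downarrow}X)^{\mathrm{op}}\to\mathrmbf{Set}$, regarded as an object of the slice 2-category $(\mathrmbf{Adj}{\,\Downarrow\,}\mathrmbf{Set})$. On an infomorphism ${\langle{f,g}\rangle}\colon\mathcal{E}_{2}\rightleftarrows\mathcal{E}_{1}$ it sends the slice morphism whose underlying 1-cell is the base-change adjunction ${\scriptstyle\sum}_{f}\dashv\Delta_{f}$ between the relevant comma categories -- the right adjoint exists because $\mathrmbf{Set}$ has pullbacks, and it is this adjunction structure on the base-change 1-cells that $\mathrmbfit{comma}$ will use, which is why the slice is taken in $\mathrmbf{Adj}$ and not merely $\mathrmbf{Cat}$ -- and whose 2-cell is $\tau_{{\langle{f,g}\rangle}}$, with component at a signature ${\langle{I,s}\rangle}$ the restriction map $t\mapsto t\cdot g$ from $\mathrmbfit{tup}_{\mathcal{E}_{1}}(I,s{\cdot}f)$ to $\mathrmbfit{tup}_{\mathcal{E}_{2}}(I,s)$. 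Three things must be checked: (i) $t\cdot g$ does lie in $\mathrmbfit{tup}_{\mathcal{E}_{2}}(I,s)$ -- this is exactly the infomorphism biconditional $g(y)\models_{\mathcal{E}_{2}}x\iff y\models_{\mathcal{E}_{1}}f(x)$; (ii) $\tau_{{\langle{f,g}\rangle}}$ is natural in ${\langle{I,s}\rangle}$, i.e.\ commutes with the reindexing maps $\mathrmbfit{tup}_{\mathcal{E}}(h)$ along signature morphisms; (iii) identities are preserved and $\tau$ composes correctly, the 2-cell for a composite infomorphism being the evident pasting, which holds because ${\scriptstyle\sum}$ and ``restrict along $g$'' are strictly functorial and infomorphisms compose componentwise. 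Then $\mathrmbfit{tup}$ is a strict functor, and applying $\mathrmbfit{comma}$ and then ${(\mbox{-})}^{\mathrm{op}}$ yields the strict functor $\mathrmbfit{tbl}\colon\mathrmbf{Cls}^{\mathrm{op}}\to\mathrmbf{Cat}$, whose value at $\mathcal{E}$ is the fiber category ${(\mathrmbf{Set}{\downarrow}\mathrmbfit{tup}_{\mathcal{E}})}=\mathrmbf{Tbl}(\mathcal{E})$ of the preceding subsection, up to the standing opposite conventions.

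For the second assertion, let $\mathrmbf{G}$ be the Grothendieck construction of $\mathrmbfit{tbl}$. An object is a pair $(\mathcal{E},\mathcal{T})$ with $\mathcal{T}\in\mathrmbf{Tbl}(\mathcal{E})$, hence -- a table carrying its own classification -- just a table; a morphism from $(\mathcal{E}_{1},\mathcal{T}_{1})$ to $(\mathcal{E}_{2},\mathcal{T}_{2})$ consists of an infomorphism ${\langle{f,g}\rangle}$ between the two classifications together with a fiber morphism in $\mathrmbf{Tbl}(\mathcal{E}_{1})$ comparing $\mathcal{T}_{1}$ with the reindexed table $\mathrmbfit{tbl}_{{\langle{f,g}\rangle}}(\mathcal{T}_{2})$ (which pulls the target signature back along $f$ and transports its tuples along $g$), composition being governed by the functoriality of $\mathrmbfit{tbl}$ just established. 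I would define the comparison functor by sending $(\mathcal{E},\mathcal{T})$ to $\mathcal{T}$ -- essentially the identity on objects, since $\mathcal{E}=\mathrmbfit{cls}(\mathcal{T})$ -- and by recombining the data of a $\mathrmbf{G}$-morphism into a quadruple ${\langle{h,f,g,k}\rangle}$, with $h$ the signature part of the fiber morphism (transposed across ${\scriptstyle\sum}_{f}\dashv\Delta_{f}$ into the format of a table morphism) and $k$ its key part. The table-morphism equation $k\cdot t_{2}=t_{1}\cdot\mathrmbfit{tup}(h,f,g)$ then becomes, via $\mathrmbfit{tup}(h,f,g)=\mathrmbfit{tup}_{\mathcal{E}_{1}}(h)\cdot\tau_{{\langle{f,g}\rangle}}(I_{2},s_{2})$, precisely the commuting square expressing that the candidate fiber morphism is well defined; so the assignment is a bijection on hom-sets. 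It respects composition because $\mathrmbfit{tbl}$ does, and it strictly commutes with the two projections down to $\mathrmbf{Cls}$, so it is an isomorphism of categories over $\mathrmbf{Cls}$ -- the parenthetical ``opposite of'' in the statement accounting for the op-bookkeeping between $\mathrmbf{Tbl}$ and $\mathrmbf{Tbl}^{\mathrm{op}}$.

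I do not expect a deep obstacle: this is a long but routine verification. The two places that need genuine care are step (iii) -- the coherent interaction of base change with the restriction transformations $\tau$ under composition of infomorphisms -- and, pervasively, keeping every ${(\mbox{-})}^{\mathrm{op}}$ consistently placed as one passes through $\mathrmbfit{tup}^{\mathrm{op}}$, the comma construction, and the final ${(\mbox{-})}^{\mathrm{op}}$. Once the factorization identity for $\mathrmbfit{tup}(h,f,g)$ is available, the rest is definition-chasing.
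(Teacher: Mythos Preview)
Your proposal is correct and follows precisely the decomposition indicated in the footnote; the paper itself offers no proof beyond that footnote, so you have supplied the verification the paper leaves implicit. The care you flag around strict functoriality of $\mathrmbfit{tup}$ and the op-bookkeeping is exactly where the content lies, and your use of the factorization $\mathrmbfit{tup}(h,f,g)=\mathrmbfit{tup}_{\mathcal{E}_{1}}(h)\cdot\tau_{{\langle{f,g}\rangle}}(I_{2},s_{2})$ to identify Grothendieck morphisms with table morphisms is the intended mechanism.
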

\begin{proposition}
The category of $\mathcal{E}$-tables 
$\mathrmbf{Tbl}({\mathcal{E}}) = (\mathrmbf{Set}{\downarrow}\mathrmbfit{tup}_{\mathcal{E}})$ 
is complete,
its key projection functor 
$\mathrmbf{Tbl}(\mathcal{E}) \xrightarrow{\mathrmbfit{key}_{\mathcal{E}}} \mathrmbf{Set}$ 
is continuous and
its signature projection functor 
$\mathrmbf{Tbl}(\mathcal{E})^{\mathrm{op}} \xrightarrow{\mathrmbfit{sign}_{\mathcal{E}}} (\mathrmbf{Set}{\downarrow}X)$ 
is cocontinuous.
\end{proposition}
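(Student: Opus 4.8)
The plan is to recognise the category of $\mathcal{E}$-tables as the comma category $\mathrmbf{Tbl}(\mathcal{E}) = (\mathrmbf{Set}{\downarrow}\mathrmbfit{tup}_{\mathcal{E}})$ attached to the pair of functors $\mathrm{Id}_{\mathrmbf{Set}} : \mathrmbf{Set} \rightarrow \mathrmbf{Set}$ and $\mathrmbfit{tup}_{\mathcal{E}} : (\mathrmbf{Set}{\downarrow}X)^{\mathrm{op}} \rightarrow \mathrmbf{Set}$, and then to invoke the standard fact about limits in comma categories: given $F : \mathcal{A} \rightarrow \mathcal{C}$ and $G : \mathcal{B} \rightarrow \mathcal{C}$ with $\mathcal{A}$ and $\mathcal{B}$ having limits of shape $\mathcal{J}$ and $G$ preserving them, the comma category $(F{\downarrow}G)$ has limits of shape $\mathcal{J}$ and the two projections $(F{\downarrow}G) \rightarrow \mathcal{A}$ and $(F{\downarrow}G) \rightarrow \mathcal{B}$ both preserve them (indeed the functor into $\mathcal{A} \times \mathcal{B}$ creates them). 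Of the hypotheses, the easy ones are at hand: $\mathrmbf{Set}$ is complete, and $(\mathrmbf{Set}{\downarrow}X)^{\mathrm{op}}$ is complete because the slice $\mathrmbf{Set}{\downarrow}X$ is cocomplete, its colimits being computed in $\mathrmbf{Set}$ and equipped with the induced arrow to $X$. Note that for limits of $(F{\downarrow}G)$ it is the \emph{codomain} functor $G = \mathrmbfit{tup}_{\mathcal{E}}$, not $F$, that must preserve limits, so the content of the proof sits there.

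The substantive hypothesis, and the main obstacle, is therefore that $\mathrmbfit{tup}_{\mathcal{E}}$ is continuous --- it must carry a colimit of signatures in $\mathrmbf{Set}{\downarrow}X$ to the corresponding limit of tuple sets in $\mathrmbf{Set}$. I would establish this directly. If ${\langle{I,s}\rangle} = \varinjlim_{j}{\langle{I_{j},s_{j}}\rangle}$ in $\mathrmbf{Set}{\downarrow}X$, with colimit legs $\lambda_{j} : I_{j} \rightarrow I$ satisfying $\lambda_{j}{\cdot}s = s_{j}$, then a function $\tau : I \rightarrow Y$ is exactly a family of functions $\tau_{j} : I_{j} \rightarrow Y$ compatible along the transition maps (the universal property of the colimit in $\mathrmbf{Set}$); and, since every element of $I$ lies in the image of some $\lambda_{j}$ and $\lambda_{j}{\cdot}s = s_{j}$, such a family is pointwise $\mathcal{E}$-classified by $s$ if and only if each $\tau_{j}$ is pointwise $\mathcal{E}$-classified by $s_{j}$. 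Hence $\mathrmbfit{tup}_{\mathcal{E}}({\langle{I,s}\rangle}) \cong \varprojlim_{j}\mathrmbfit{tup}_{\mathcal{E}}(I_{j},s_{j})$, the limit projections being the precomposition maps $\mathrmbfit{tup}_{\mathcal{E}}(\lambda_{j})$; this is precisely continuity of $\mathrmbfit{tup}_{\mathcal{E}} : (\mathrmbf{Set}{\downarrow}X)^{\mathrm{op}} \rightarrow \mathrmbf{Set}$. (Alternatively one may observe that $\mathrmbfit{tup}_{\mathcal{E}}$ is a right adjoint --- this is implicit in the $\left(\mathrmbf{Adj}{\,\Downarrow\,}\mathrmbf{Set}\right)$ presentation of the table indexing functor in the previous footnote --- and hence automatically continuous.)

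With these ingredients the comma-category fact yields completeness of $\mathrmbf{Tbl}(\mathcal{E})$ at once, together with an explicit description of the limit --- the ``join'' --- of a diagram of $\mathcal{E}$-tables ${\langle{\mathcal{S}_{j},\mathcal{E},K_{j},t_{j}}\rangle}$: its key set is $\varprojlim_{j} K_{j}$, its signature is $\varinjlim_{j}{\langle{I_{j},s_{j}}\rangle}$, and its tuple function is the arrow into $\varprojlim_{j}\mathrmbfit{tup}_{\mathcal{E}}(I_{j},s_{j}) \cong \mathrmbfit{tup}_{\mathcal{E}}(\varinjlim_{j}{\langle{I_{j},s_{j}}\rangle})$ induced by the cone $(\pi_{j}{\cdot}t_{j})_{j}$. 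Continuity of $\mathrmbfit{key}_{\mathcal{E}}$ is then immediate, since the key set of the limit is $\varprojlim_{j}K_{j}$; and $\mathrmbfit{sign}_{\mathcal{E}}^{\mathrm{op}} : \mathrmbf{Tbl}(\mathcal{E}) \rightarrow (\mathrmbf{Set}{\downarrow}X)^{\mathrm{op}}$ preserves limits, so that its opposite $\mathrmbfit{sign}_{\mathcal{E}} = {(\mathrmbfit{sign}_{\mathcal{E}}^{\mathrm{op}})}^{\mathrm{op}} : \mathrmbf{Tbl}(\mathcal{E})^{\mathrm{op}} \rightarrow (\mathrmbf{Set}{\downarrow}X)$ preserves colimits, i.e.\ is cocontinuous. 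Apart from the continuity of $\mathrmbfit{tup}_{\mathcal{E}}$, the only thing that needs care is the bookkeeping with the two nested occurrences of $(\mbox{-})^{\mathrm{op}}$.
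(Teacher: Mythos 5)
Your proposal is correct and takes essentially the same route as the paper's proof: view $\mathrmbf{Tbl}(\mathcal{E})$ as the comma category $(\mathrm{Id}_{\mathrmbf{Set}}{\downarrow}\mathrmbfit{tup}_{\mathcal{E}})$, use completeness of $\mathrmbf{Set}$ and of $(\mathrmbf{Set}{\downarrow}X)^{\mathrm{op}}$ (the latter from cocompleteness of the slice), and invoke the standard comma-category limit theorem, which also gives preservation of limits by the two projections and hence the stated (co)continuity of $\mathrmbfit{key}_{\mathcal{E}}$ and $\mathrmbfit{sign}_{\mathcal{E}}$. The only difference is that you actually verify the key hypothesis that $\mathrmbfit{tup}_{\mathcal{E}}$ is continuous (a colimit of signatures is sent to the limit of tuple sets), a fact the paper merely asserts.
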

\begin{proof}
(We have both an abstract and a useful concrete proof, 
but only have room for the former.)
The category $(\mathrmbf{Set}{\downarrow}X)^{\mathrm{op}}$ is complete,
since $(\mathrmbf{Set}{\downarrow}X)$ is cocomplete.
The tuple functor $\mathrmbfit{tup}_{\mathcal{E}}$ is continuous.
\footnote{If $\mathcal{E}$ and $\mathcal{B}$ are complete, and 
both $T : \mathcal{E} \rightarrow \mathcal{C}$ and $S : \mathcal{B} \rightarrow \mathcal{C}$ are continuous functors, 
then the comma category $(T \downarrow S)$ is also complete and 
the projection functors 
$(T{\downarrow}S) \rightarrow \mathcal{E}$ and $(T{\downarrow}S) \rightarrow \mathcal{B}$ 
are limit preserving.}
\end{proof}
The category of $\mathcal{E}$-tables $\mathrmbf{Tbl}({\mathcal{E}})$
is the semantical domain for a relational database $\mathcal{D}$ with entity classification $\mathcal{E}$.
Completeness of $\mathrmbf{Tbl}({\mathcal{E}})$ means that, not just binary, 
but database joins over arbitrary diagrams of tables of $\mathcal{D}$ can be computed.  
\begin{proposition}
For any infomorphism
${\langle{f,g}\rangle} : \mathcal{E}_{2} \rightarrow \mathcal{E}_{1}$,
the 
table fiber functor
$\mathrmbfit{tbl}_{{\langle{f,g}\rangle}} :
\mathrmbf{Tbl}(\mathcal{E}_{1}) 
\rightarrow 
\mathrmbf{Tbl}(\mathcal{E}_{2})$
is continuous.
\end{proposition}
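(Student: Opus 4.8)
The plan is to realize $\mathrmbfit{tbl}_{\langle f,g\rangle}$ as a functor between the comma categories $\mathrmbf{Tbl}(\mathcal{E}_1) = (\mathrmbf{Set}{\downarrow}\mathrmbfit{tup}_{\mathcal{E}_1})$ and $\mathrmbf{Tbl}(\mathcal{E}_2) = (\mathrmbf{Set}{\downarrow}\mathrmbfit{tup}_{\mathcal{E}_2})$ built from continuous component functors, and then to push limits through the very comma-category machinery that was used above to prove completeness of $\mathrmbf{Tbl}(\mathcal{E})$.

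First I would unwind the definition $\mathrmbfit{tbl} = \mathrmbfit{tup}^{\mathrm{op}}\circ\mathrmbfit{comma}\circ{(\mbox{-})}^{\mathrm{op}}$ to obtain the concrete action of the table fiber functor. On an $\mathcal{E}_1$-table $\mathcal{T}_1 = {\langle{K_1,\langle I_1,s_1\rangle,t_1}\rangle}$, with $\langle I_1,s_1\rangle \in (\mathrmbf{Set}{\downarrow}X_1)$ and $t_1 : K_1 \to \mathrmbfit{tup}_{\mathcal{E}_1}(I_1,s_1)$, it returns the $\mathcal{E}_2$-table with the \emph{same} key set $K_1$, with underlying signature the pullback $\Delta_f\langle I_1,s_1\rangle \in (\mathrmbf{Set}{\downarrow}X_2)$ of $\langle I_1,s_1\rangle$ along $f : X_2 \to X_1$, and with content function $t_1$ postcomposed with the canonical comparison $\mathrmbfit{tup}_{\mathcal{E}_1}(I_1,s_1) \to \mathrmbfit{tup}_{\mathcal{E}_2}(\Delta_f\langle I_1,s_1\rangle)$ obtained by splicing $\mathrmbfit{tup}_{\mathcal{E}_1}$ applied to the counit $\varepsilon_f : {\scriptstyle\sum}_f\Delta_f\langle I_1,s_1\rangle \to \langle I_1,s_1\rangle$ of the adjunction ${\scriptstyle\sum}_f \dashv \Delta_f$ with the tuple reindexing $\tau_{\langle f,g\rangle}$ (precomposition of tuples with $g$). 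Writing $\Phi_{\mathcal{E}} = {\langle{\mathrmbfit{key}_{\mathcal{E}},\mathrmbfit{sign}^{\mathrm{op}}_{\mathcal{E}}}\rangle} : \mathrmbf{Tbl}(\mathcal{E}) \to \mathrmbf{Set}\times(\mathrmbf{Set}{\downarrow}X)^{\mathrm{op}}$ for the pairing of the key and signature projection functors, this description is precisely the assertion that the square $\Phi_{\mathcal{E}_2}\circ\mathrmbfit{tbl}_{\langle f,g\rangle} = (\mathrmit{id}_{\mathrmbf{Set}}\times\Delta_f^{\mathrm{op}})\circ\Phi_{\mathcal{E}_1}$ commutes.

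The conclusion then follows formally. Since $\mathrmbfit{tup}_{\mathcal{E}_1}$ and $\mathrmbfit{tup}_{\mathcal{E}_2}$ are continuous — the fact used above to prove completeness, applied to $\mathrmbfit{tup}_{\mathcal{E}}$ together with the identity functor on $\mathrmbf{Set}$ — the standard strengthening of that comma-category fact shows that each pairing $\Phi_{\mathcal{E}}$ creates limits; in particular $\Phi_{\mathcal{E}_1}$ preserves limits and $\Phi_{\mathcal{E}_2}$ reflects them. Because $\mathrmbf{Set}$ is locally cartesian closed, the pullback functor $\Delta_f$ sits in a string ${\scriptstyle\sum}_f \dashv \Delta_f \dashv \Pi_f$ and is therefore cocontinuous, so $\Delta_f^{\mathrm{op}}$, and hence $\mathrmit{id}_{\mathrmbf{Set}}\times\Delta_f^{\mathrm{op}}$, is continuous. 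Consequently $\Phi_{\mathcal{E}_2}\circ\mathrmbfit{tbl}_{\langle f,g\rangle} = (\mathrmit{id}_{\mathrmbf{Set}}\times\Delta_f^{\mathrm{op}})\circ\Phi_{\mathcal{E}_1}$ is continuous; it therefore carries each limit cone of $\mathrmbf{Tbl}(\mathcal{E}_1)$ to a limit cone, and since $\Phi_{\mathcal{E}_2}$ reflects limits, $\mathrmbfit{tbl}_{\langle f,g\rangle}$ itself carries limit cones to limit cones, i.e. it is continuous.

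I expect the only genuine point of friction to be the first step: chasing the definition of $\mathrmbfit{tbl}$ far enough to confirm that the table fiber functor leaves the key component untouched and acts on signatures by pullback along $f$, together with the routine identification of the induced content map through $\varepsilon_f$ and $\tau_{\langle f,g\rangle}$ and the harmless bookkeeping of opposites. After that the argument is soft; its one substantive input, that pullback along a function of sets preserves colimits, is standard. A concrete alternative would instead check directly that products, equalizers and the terminal $\mathcal{E}$-table — computed key-set-wise in $\mathrmbf{Set}$ and signature-wise in $(\mathrmbf{Set}{\downarrow}X)^{\mathrm{op}}$ — are preserved, which is obvious for the key part and true for the signature part because $\Delta_f^{\mathrm{op}}$ preserves those limits; but the comma-category route above is shorter.
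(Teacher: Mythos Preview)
The paper states this proposition without proof, so there is no argument to compare against; your proposal is in effect supplying the missing proof, and it does so correctly and in the spirit of the surrounding material.

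Your identification of the fiber functor's action is right and is confirmed by the paper's own bookkeeping: the schema morphism in a table morphism is given via $h : {\scriptstyle\sum}_{f}(I_{2},s_{2}) \to (I_{1},s_{1})$, and Figure~\ref{table:morphism} explicitly displays the factorization $h = \widehat{h}\cdot\varepsilon_{f}$ through the adjunction ${\scriptstyle\sum}_{f} \dashv f^{\ast}$, so the cartesian lift of an $\mathcal{E}_{1}$-signature $(I_{1},s_{1})$ along ${\langle{f,g}\rangle}$ is indeed $f^{\ast}(I_{1},s_{1})$ (your $\Delta_{f}(I_{1},s_{1})$), with the key set unchanged and the content map adjusted by the counit and $\tau_{\langle f,g\rangle}$ exactly as you describe. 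The footnote defining $\mathrmbfit{tup}(f,g) = {\langle{{\scriptstyle\sum}_{f},f^{\ast},\tau_{\langle f,g\rangle}}\rangle}$ as a morphism in $(\mathrmbf{Adj}{\Downarrow}\mathrmbf{Set})$ says the same thing at the level of the indexed-category machinery. So the ``friction point'' you anticipated dissolves on inspection.

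The remainder of your argument is sound: the projection $\Phi_{\mathcal{E}} = {\langle{\mathrmbfit{key}_{\mathcal{E}},\mathrmbfit{sign}^{\mathrm{op}}_{\mathcal{E}}}\rangle}$ creates limits (this is the strengthening of the comma-category fact the paper invokes in the footnote to the preceding proposition), hence preserves and reflects them; pullback in $\mathrmbf{Set}$ has a right adjoint $\Pi_{f}$, so $f^{\ast}$ is cocontinuous and $(f^{\ast})^{\mathrm{op}}$ continuous; and the commuting square $\Phi_{\mathcal{E}_{2}}\circ\mathrmbfit{tbl}_{\langle f,g\rangle} = (\mathrmit{id}\times(f^{\ast})^{\mathrm{op}})\circ\Phi_{\mathcal{E}_{1}}$ then forces continuity of $\mathrmbfit{tbl}_{\langle f,g\rangle}$. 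This is precisely the kind of abstract comma-category reasoning the paper favours, and your alternative concrete route via products, equalizers and terminal object is what the paper would presumably call the ``useful concrete proof'' it alludes to but omits for the previous proposition.
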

Continuity of $\mathrmbfit{tbl}_{{\langle{f,g}\rangle}}$
means that database joins are preserved:
database joins of $\mathcal{E}_{1}$-tables are mapped to
database joins of $\mathcal{E}_{2}$-tables.
\begin{theorem}
The category of tables $\mathrmbf{Tbl}$ is complete.
\end{theorem}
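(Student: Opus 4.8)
The plan is to obtain completeness of $\mathrmbf{Tbl}$ from the classification-indexed category $\mathrmbfit{tbl} : \mathrmbf{Cls}^{\mathrm{op}} \to \mathrmbf{Cat}$ rather than by a bare-hands construction over an arbitrary diagram. By the proposition identifying the Grothendieck construction of $\mathrmbfit{tbl}$ with the classification projection $\mathrmbf{Tbl}^{\mathrm{op}} \xrightarrow{\mathrmbfit{cls}} \mathrmbf{Cls}$, the category of tables is (up to the opposite) the total category of this indexed category; its fibers $\mathrmbf{Tbl}(\mathcal{E})$ are complete by an earlier proposition, and its reindexing functors $\mathrmbfit{tbl}_{{\langle{f,g}\rangle}}$ are continuous by the preceding proposition. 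The plan is therefore to invoke the standard theorem that the total category of such an indexed category is complete once the base is complete, the fibers are complete, and the reindexing is suitably well behaved. Concretely, for a diagram $D : \mathrmbf{J} \to \mathrmbf{Tbl}$ with $D(j) = {\langle{\mathcal{S}_{j},\mathcal{E}_{j},K_{j},t_{j}}\rangle}$, one would form the colimit $\mathcal{E}^{\ast}$ in $\mathrmbf{Cls}$ of the underlying diagram $j \mapsto \mathcal{E}_{j}$ of classifications, with its universal cocone of infomorphisms $\mathcal{E}_{j} \to \mathcal{E}^{\ast}$; transport each $\mathcal{T}_{j}$ into the single fiber $\mathrmbf{Tbl}(\mathcal{E}^{\ast})$ along the corresponding leg; take the limit of the transported diagram inside $\mathrmbf{Tbl}(\mathcal{E}^{\ast})$, which exists by completeness of that fiber; and check that the resulting table is $\lim D$.

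The first thing to nail down is the base: $\mathrmbf{Cls}$ must be complete (it is in fact complete and cocomplete). This is classical --- $\mathrmbf{Cls}$ is the category of Chu spaces over the two-element set --- and can also be seen directly: a product of classifications carries the product of the type sets and the disjoint sum of the token sets with the evident classification relation, equalizers are formed dually, and the flip sending $\mathcal{E} = {\langle{X,Y,\models_{\mathcal{E}}}\rangle}$ to ${\langle{Y,X}\rangle}$ with the transposed relation exhibits $\mathrmbf{Cls} \cong \mathrmbf{Cls}^{\mathrm{op}}$, so completeness and cocompleteness are interderivable. The colimit $\mathcal{E}^{\ast}$ above is exactly a small colimit in $\mathrmbf{Cls}$, hence exists. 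Granting the general theorem, $\mathrmbf{Tbl}$ is then complete, and as a byproduct the schema and key projections inherit (co)continuity from the fiberwise statements.

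The steps needing genuine care are the two hinges of the argument, not the individual constructions. First, the variance bookkeeping: infomorphisms run opposite to the table morphisms that carry them, the fibers of $\mathrmbfit{cls}$ are the opposites of the categories $\mathrmbf{Tbl}(\mathcal{E})$, and the table fiber functors are contravariant in the infomorphism, so one must apply the indexed-category theorem in the direction that yields completeness of $\mathrmbf{Tbl}$ and not of $\mathrmbf{Tbl}^{\mathrm{op}}$ --- equivalently, form $\mathcal{E}^{\ast}$ as a colimit and transport the tables forward into the fiber over it, not backward. Second, and this is the point I expect to be the real obstacle, one must verify that transporting tables interacts correctly with fiberwise limits: that a cone over $D$ pushes to a cone over the transported diagram and that universality is preserved. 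This is precisely where continuity of the table fiber functors $\mathrmbfit{tbl}_{{\langle{f,g}\rangle}}$ enters (and, depending on the exact form of the general theorem used, the existence of right adjoints to these functors, making $\mathrmbfit{cls}$ a bifibration); establishing this compatibility is the crux on which the proof turns.
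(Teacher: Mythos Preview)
Your proposal is correct and follows essentially the same route as the paper: the paper's proof simply records that $\mathrmbf{Cls}$ is complete, that each fiber $\mathrmbf{Tbl}(\mathcal{E})$ is complete, and that each reindexing functor $\mathrmbfit{tbl}_{\langle f,g\rangle}$ is continuous, and then invokes the theorem of Tarlecki, Burstall and Goguen on completeness of the Grothendieck construction of an indexed category. Your extra care about variance and about whether a bifibration hypothesis is needed goes beyond what the paper spells out, but the paper's argument does not rely on right adjoints to the reindexing functors---only on their continuity---so the worry you flag in your final paragraph is not an obstacle here.
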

\begin{proof}
The indexing category $\mathrmbf{Cls}$ is complete,
the fiber category $\mathrmbfit{tbl}(\mathcal{E})$
is complete for each classification $\mathcal{E}$, and
the fiber functor
$\mathrmbfit{tbl}_{{\langle{f,g}\rangle}} : \mathrmbfit{tbl}(\mathcal{E}_{1}) \rightarrow \mathrmbfit{tbl}(\mathcal{E}_{2})$
is continuous for each infomorphism ${\langle{f,g}\rangle} : \mathcal{E}_{2} \rightarrow \mathcal{E}_{1}$.
Hence,
this is an application of a theorem of Tarlecki, Burstall and Goguen~\cite{tarlecki:burstall:goguen:91}.
~\footnote{If $\mathrmbf{C} : \mathrmbf{I}^{\mathrm{op}} \rightarrow \mathrmbf{Cat}$ 
is an indexed category such that
%
%
$\mathrmbf{I}$ is complete,
%
$\mathrmbf{C}_{i}$ is complete for all indices $i \in \mathrmbf{I}$, and
%
$\mathrmbf{C}_{\sigma} : \mathrmbf{C}_{j} \rightarrow \mathrmbf{C}_{i}$ is continuous 
for all index morphisms $\sigma : i \rightarrow j$,
%
%
then $\mathrmbf{Gr}(\mathrmbf{C})$ is complete.}
\end{proof}

\subsection{Schema Indexed Category.}

For any fixed (simple) schema $\mathcal{S} = {\langle{X,I,s}\rangle}$,
the $\mathcal{S}^{\mathrm{th}}$-fiber category $\mathrmbf{Tbl}(\mathcal{S})$
with respect to the schema functor
$\mathrmbf{Tbl}^{\mathrm{op}} \xrightarrow{\mathrmbfit{sch}} \mathrmbf{Sch}$,
called the category of $\mathcal{S}$-tables,
is the comma category 
with key and $X$-classification projection functors 
\footnote{The tuple functor
$\mathrmbfit{tup}_{\mathcal{S}} : \mathrmbf{Cls}(X)^{\mathrm{op}} \rightarrow \mathrmbf{Set}$
maps an $X$-classification
$\mathcal{E} = {\langle{X,Y,\models}\rangle}$
to the tuple set 
$\mathrmbfit{tup}_{\mathcal{S}}(Y,\models) = \mathrmbfit{tup}_{\mathcal{E}}(I,S)$
and maps an $X$-infomorphism
${\langle{\mathrmit{1}_{X},g}\rangle} : 
\mathcal{E}_{2} = {\langle{X,Y_{2},\models_{2}}\rangle} \rightleftarrows 
{\langle{X,Y_{1},\models_{1}}\rangle} = g^{-1}(\mathcal{E}_{2}) = \mathcal{E}_{1}$
with instance function $g : Y_{1} \rightarrow Y_{2}$
to the tuple function
$\mathrmbfit{tup}_{\mathcal{S}}(g) = \tau_{{\langle{\mathrmit{1}_{X},g}\rangle}}(I,s) = {(\mbox{-})}{\cdot}{g} : 
\mathrmbfit{tup}_{\mathcal{S}}(Y_{1},\models_{1}) = \mathrmbfit{tup}_{\mathcal{E}_{1}}(I,S) \rightarrow
\mathrmbfit{tup}_{\mathcal{E}_{2}}(I,S) = \mathrmbfit{tup}_{\mathcal{S}}(Y_{2},\models_{2})$.}
{\footnotesize\[
\mathrmbf{Set} \xleftarrow{\mathrmbfit{key}_{\mathcal{S}}}\mathrmbf{Tbl}(\mathcal{S}) = {(\mathrmbf{Set}{\downarrow}\mathrmbfit{tup}_{\mathcal{S}})}\xrightarrow{\mathrmbfit{cls}_{\mathcal{S}}^{\mathrm{op}}} {\mathrmbf{Cls}(X)}^{\mathrm{op}}.
\]\normalsize}
\hspace{-6pt}
It is described as follows.
A \emph{fiber object} $\mathcal{T} \in \mathrmbf{Tbl}(\mathcal{S})$,
or an $\mathcal{S}$-table (database $\mathcal{S}$-relation),
is any table $\mathcal{T} \in \mathrmbf{Tbl}$ with (simple) schema $\mathrmbfit{sch}(\mathcal{T}) = \mathcal{S}$.
A \emph{fiber morphism} in $\mathrmbf{Tbl}(\mathcal{S})$
is any table morphism
${\langle{g,k}\rangle} :
\mathcal{T} = {\langle{\mathcal{S},\mathcal{E},K,t}\rangle} \leftarrow
{\langle{\mathcal{S},\widetilde{\mathcal{E}},\widetilde{K},\widetilde{t}}\rangle} = \widetilde{\mathcal{T}}$
in $\mathrmbf{Tbl}$ 
with identity (simple) schema morphism 
$\mathrmit{id}_{\mathcal{S}} = {\langle{\mathrmit{id}_{X},\mathrmit{id}_{I}}\rangle}$.
It consists of 
a universe (entity instance) function $g : \widetilde{Y} \rightarrow Y$
defining
an entity infomorphism 
${\langle{\mathrmit{1}_{X},g}\rangle} : 
\mathcal{E} = {\langle{X,Y,\models_{\mathcal{E}}}\rangle} \rightleftarrows 
{\langle{X,\widetilde{Y},\widetilde{\models}}\rangle} = g^{-1}(\mathcal{E}) = \widetilde{\mathcal{E}}$
and hence the presheaf morphism
${\langle{(\mathrmbf{Set}{\downarrow}X),\mathrmbfit{tup}_{\mathcal{E}}}\rangle} 
\xrightarrow{{\langle{\mathrmit{1},\tau_{{\langle{\mathrmit{1}_{X},g}\rangle}}}\rangle}}
{\langle{(\mathrmbf{Set}{\downarrow}X),\mathrmbfit{tup}_{\widetilde{\mathcal{E}}}}\rangle}$
with tuple natural transformation
$\tau_{{\langle{\mathrmit{1}_{X},g}\rangle}} :
\mathrmbfit{tup}_{\widetilde{\mathcal{E}}} \Rightarrow \mathrmbfit{tup}_{\mathcal{E}}$,
and a key function $k : \widetilde{K} \rightarrow K$,
which satisfy the condition
$k \cdot t = \widetilde{t} \cdot \mathrmbfit{tup}(g)$.
\footnote{The components determining variance between 
$\mathcal{T} = {\langle{\mathcal{S},\mathcal{E},K,t}\rangle}$
and
$\widetilde{\mathcal{T}} = {\langle{\mathcal{S},\widetilde{\mathcal{E}},\widetilde{K},\widetilde{t}}\rangle}$
are the entity instance function (varying instances and their incidence or classification relations)
and
the key function
(varying the set of keys and the tuple natural transformations).}
%
%
\begin{proposition}
There is an indexed category of tables 
$\mathrmbfit{tbl} : \mathrmbf{Sch}^{\mathrm{op}} \rightarrow  \mathrmbf{Cat}$,  
whose Grothendieck construction (fibered category)
is (the opposite of) the category of tables with the schema functor projection
$\mathrmbf{Tbl}^{\mathrm{op}} \xrightarrow{\mathrmbfit{sch}} \mathrmbf{Sch}$.
\end{proposition}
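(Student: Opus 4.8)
The plan is to imitate the construction behind the classification-indexed proposition above: realize $\mathrmbfit{tbl} : \mathrmbf{Sch}^{\mathrm{op}} \to \mathrmbf{Cat}$ as $\mathrmbfit{comma}$ applied to a ``schema tuple'' functor, and then compare its Grothendieck construction with $\mathrmbf{Tbl}^{\mathrm{op}}$ fibered over $\mathrmbf{Sch}$ by $\mathrmbfit{sch}$. On objects $\mathrmbfit{tbl}$ sends $\mathcal{S} = {\langle{X,I,s}\rangle}$ to the fiber $\mathrmbf{Tbl}(\mathcal{S}) = (\mathrmbf{Set}{\downarrow}\mathrmbfit{tup}_{\mathcal{S}})$ already described, carrying the presheaf $\mathrmbfit{tup}_{\mathcal{S}} : \mathrmbf{Cls}(X)^{\mathrm{op}} \to \mathrmbf{Set}$. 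For a schema morphism ${\langle{h,f}\rangle} : \mathcal{S}_{2} \to \mathcal{S}_{1}$ (with $f : X_{2} \to X_{1}$ and $h : I_{2} \to I_{1}$ satisfying $s_{2}{\cdot}f = h{\cdot}s_{1}$) I would assemble the reindexing functor $\mathrmbfit{tbl}_{{\langle{h,f}\rangle}} : \mathrmbf{Tbl}(\mathcal{S}_{1}) \to \mathrmbf{Tbl}(\mathcal{S}_{2})$ from two pieces: the reclassification functor $f^{-1} : \mathrmbf{Cls}(X_{1}) \to \mathrmbf{Cls}(X_{2})$, which keeps the instance set and sets $y \models' x_{2}$ iff $y \models f(x_{2})$ (it comes with a canonical infomorphism ${\langle{f,\mathrmit{1}_{Y}}\rangle} : f^{-1}(\mathcal{E}) \rightleftarrows \mathcal{E}$), and the precomposition-with-$h$ map $\mathrmbfit{tup}_{\mathcal{E}}(h) : \mathrmbfit{tup}_{\mathcal{E}}(I_{1},s_{1}) \to \mathrmbfit{tup}_{f^{-1}(\mathcal{E})}(I_{2},s_{2})$, whose target is identified with $\mathrmbfit{tup}_{\mathcal{E}}(I_{2},s_{2}{\cdot}f)$ precisely because $s_{2}{\cdot}f = h{\cdot}s_{1}$. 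Thus $\mathrmbfit{tbl}_{{\langle{h,f}\rangle}}$ takes ${\langle{\mathcal{S}_{1},\mathcal{E},K,t}\rangle}$ to ${\langle{\mathcal{S}_{2},f^{-1}(\mathcal{E}),K,t{\cdot}\mathrmbfit{tup}_{\mathcal{E}}(h)}\rangle}$ and keeps the instance and key functions of fiber morphisms. The precomposition maps form a natural transformation $\mathrmbfit{tup}_{\mathcal{S}_{1}} \Rightarrow \mathrmbfit{tup}_{\mathcal{S}_{2}} \circ (f^{-1})^{\mathrm{op}}$, so $\mathrmbfit{tbl}_{{\langle{h,f}\rangle}}$ is $\mathrmbfit{comma}$ of the resulting morphism of $\mathrmbf{Set}$-valued presheaves, and functoriality of $\mathrmbfit{tbl}$ reduces to $\mathrmit{id}^{-1} = \mathrmit{id}$, $(f{\cdot}f')^{-1} = (f')^{-1}{\cdot}f^{-1}$, $(h{\cdot}h'){\cdot}\sigma = h{\cdot}(h'{\cdot}\sigma)$ and the functoriality of $\mathrmbfit{comma}$.

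Next I would define the comparison functor $\mathrmbf{Gr}(\mathrmbfit{tbl}) \to \mathrmbf{Tbl}^{\mathrm{op}}$, which is the identity on objects, a Grothendieck object $(\mathcal{S},\mathcal{T})$ being just a table $\mathcal{T}$ with $\mathrmbfit{sch}(\mathcal{T}) = \mathcal{S}$. A morphism $(\mathcal{S}_{2},\mathcal{T}_{2}) \to (\mathcal{S}_{1},\mathcal{T}_{1})$ of $\mathrmbf{Gr}(\mathrmbfit{tbl})$ is a schema morphism ${\langle{h,f}\rangle} : \mathcal{S}_{2} \to \mathcal{S}_{1}$ together with a fiber morphism in $\mathrmbf{Tbl}(\mathcal{S}_{2})$ between $\mathcal{T}_{2}$ and $\mathrmbfit{tbl}_{{\langle{h,f}\rangle}}(\mathcal{T}_{1}) = {\langle{\mathcal{S}_{2},f^{-1}(\mathcal{E}_{1}),K_{1},t_{1}{\cdot}\mathrmbfit{tup}_{\mathcal{E}_{1}}(h)}\rangle}$, i.e. an instance function $g : Y_{1} \to Y_{2}$ and a key function $k : K_{1} \to K_{2}$ with $k{\cdot}t_{2} = (t_{1}{\cdot}\mathrmbfit{tup}_{\mathcal{E}_{1}}(h)){\cdot}\mathrmbfit{tup}(g)$; I would send this to the table morphism ${\langle{h,f,g,k}\rangle} : \mathcal{T}_{1} \to \mathcal{T}_{2}$. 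Two identities already in hand make this a bijection on each hom-set: the factorization ${\langle{f,g}\rangle} = {\langle{f,\mathrmit{1}_{Y_{1}}}\rangle} \circ {\langle{\mathrmit{1}_{X_{2}},g}\rangle}$ of the entity infomorphism through $g^{-1}(\mathcal{E}_{2}) = f^{-1}(\mathcal{E}_{1})$, which is merely the infomorphism condition rewritten and shows the fiber part supplies exactly the component $g$; and the identity $\mathrmbfit{tup}(h,f,g) = \mathrmbfit{tup}_{\mathcal{E}_{1}}(h){\cdot}\tau_{{\langle{f,g}\rangle}}(I_{2},s_{2})$ from the definition of table morphisms, combined with $\mathrmbfit{tup}(g) = \tau_{{\langle{\mathrmit{1}_{X_{2}},g}\rangle}}(I_{2},s_{2}) = \tau_{{\langle{f,g}\rangle}}(I_{2},s_{2})$ under the above identification, which turns the fiber coherence condition into the table-morphism condition $k{\cdot}t_{2} = t_{1}{\cdot}\mathrmbfit{tup}(h,f,g)$ verbatim.

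Finally I would check that the comparison is an isomorphism of categories over $\mathrmbfit{sch}$: it is the identity on objects and both projections to $\mathrmbf{Sch}$ read off ${\langle{h,f}\rangle}$; composition is component-wise on both sides; and it is bijective on hom-sets because the data and single axiom of a table morphism, namely a schema morphism ${\langle{h,f}\rangle}$, an entity infomorphism ${\langle{f,g}\rangle}$ with the shared type component, and a key function $k$ subject to one equation, decompose \emph{uniquely} via the factorization above into a schema morphism ${\langle{h,f}\rangle}$ plus a fiber morphism ${\langle{g,k}\rangle}$ over it. The hard part, and really the only place care is needed, is keeping the variances straight: schemas and entity types are covariant in a table morphism while keys, instances and tuple sets are contravariant, and opposites occur throughout ($\mathrmbf{Sch}^{\mathrm{op}}$, $\mathrmbf{Tbl}^{\mathrm{op}}$, $\mathrmbf{Cls}(X)^{\mathrm{op}}$, $(f^{-1})^{\mathrm{op}}$), so one must confirm the comparison comes out genuinely covariant and that a $\mathrmbf{Gr}(\mathrmbfit{tbl})$-morphism $(\mathcal{S}_{2},\mathcal{T}_{2}) \to (\mathcal{S}_{1},\mathcal{T}_{1})$ really corresponds to a table morphism $\mathcal{T}_{1} \to \mathcal{T}_{2}$ and not its reverse. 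Once the directions are pinned down, what remains is the routine verification that $\mathrmbfit{comma}$ and the Grothendieck construction are functorial.
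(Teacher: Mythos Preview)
The paper states this proposition without proof, so there is no argument to compare against directly. Your approach is sound and parallels the sketch the paper gives (in a footnote) for the classification-indexed analogue: build the fiber functor by applying a comma-category construction to a morphism of $\mathrmbf{Set}$-valued presheaves, then identify the Grothendieck total category with $\mathrmbf{Tbl}^{\mathrm{op}}$. The key technical identity you isolate, $g^{-1}(\mathcal{E}_{2}) = f^{-1}(\mathcal{E}_{1})$ as $X_{2}$-classifications on $Y_{1}$, is exactly the infomorphism condition rewritten, and it is what makes the factorization ${\langle{f,g}\rangle} = {\langle{f,\mathrm{1}_{Y_{1}}}\rangle} \circ {\langle{\mathrm{1}_{X_{2}},g}\rangle}$ go through; this is the substantive step, and you have it right.

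One small point worth tightening: you assert that $\mathrmbfit{tup}(g) = \tau_{{\langle{\mathrm{1}_{X_{2}},g}\rangle}}(I_{2},s_{2}) = \tau_{{\langle{f,g}\rangle}}(I_{2},s_{2})$ ``under the above identification.'' This is correct but deserves one line of justification: the map $\tau_{{\langle{f,g}\rangle}}(I_{2},s_{2})$ is postcomposition by $g$ on tuples, and the domain $\mathrmbfit{tup}_{\mathcal{E}_{1}}(I_{2},s_{2}{\cdot}f)$ literally equals $\mathrmbfit{tup}_{f^{-1}(\mathcal{E}_{1})}(I_{2},s_{2})$ by your earlier computation, so the two natural-transformation components coincide as set maps. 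With that spelled out, the chain $k{\cdot}t_{2} = t_{1}{\cdot}\mathrmbfit{tup}_{\mathcal{E}_{1}}(h){\cdot}\tau_{{\langle{f,g}\rangle}}(I_{2},s_{2}) = t_{1}{\cdot}\mathrmbfit{tup}(h,f,g)$ is immediate and the hom-set bijection is clean. Your caveat about variance is well placed; the argument is otherwise complete.
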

%

\subsection{Semidesignation Indexed Category.}

A semidesignation
$\mathcal{S} = {\langle{I,s,\mathcal{E}}\rangle}$,
consists of a schema
${\langle{X,I,s}\rangle}$,
and an entity classification $\mathcal{E} = {\langle{X,Y,\models_{\mathcal{E}}}\rangle}$ 
with a common (entity) type set component $X$.
A semidesignation morphism
${\langle{h,f,g}\rangle} : \mathcal{S}_{2} \rightarrow \mathcal{S}_{1}$
consists of 
a schema morphism
${\langle{h,f}\rangle} : \mathcal{S}_{2} \rightarrow \mathcal{S}_{1}$
and 
an entity infomorphism 
${\langle{f,g}\rangle} : 
\mathcal{E}_{2} = {\langle{X_{2},Y_{2},\models_{\mathcal{E}_{2}}}\rangle} \rightleftarrows 
{\langle{X_{1},Y_{1},\models_{\mathcal{E}_{1}}}\rangle} = \mathcal{E}_{1}$
with a common (entity) type function component $f : X_{2} \rightarrow X_{1}$.
For any semidesignation
$\mathcal{S} = {\langle{I,s,\mathcal{E}}\rangle}$,
the set of tuples of $\mathcal{S}$ is 
$\mathrmbfit{tup}(\mathcal{S}) = \mathrmbfit{tup}_{\mathcal{E}}(I,s)$,
the set of $\mathcal{E}$-tuples in the extent of ${\langle{I,s}\rangle}$.
\begin{lemma}
Any semidesignation morphism
${\langle{h,f,g}\rangle} : \mathcal{S}_{2} \rightarrow \mathcal{S}_{1}$
defines a tuple function 
$\mathrmbfit{tup}(h,f,g) : 
\mathrmbfit{tup}(\mathcal{S}_{1}) = \mathrmbfit{tup}_{\mathcal{E}_{1}}(I_{1},s_{1}) \rightarrow 
\mathrmbfit{tup}_{\mathcal{E}_{2}}(I_{2},s_{2}) = \mathrmbfit{tup}(\mathcal{S}_{2})$.
Hence, there is a tuple functor
$\mathrmbfit{tup} : \mathrmbf{Sdsgn}^{\mathrm{op}} \rightarrow \mathrmbf{Set}$.
\end{lemma}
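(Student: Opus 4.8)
\noindent\emph{Proof proposal.}
The plan is to take the tuple function to be exactly the reindex‑then‑push construction already written down for table morphisms. Given a semidesignation morphism ${\langle{h,f,g}\rangle} : \mathcal{S}_{2} \rightarrow \mathcal{S}_{1}$ — so $h : I_{2}\rightarrow I_{1}$ is an $X_{1}$‑signature morphism with $s_{2}{\cdot}f = h{\cdot}s_{1}$, and ${\langle{f,g}\rangle}$ is an entity infomorphism with instance function $g : Y_{1}\rightarrow Y_{2}$ — I would set
\[
\mathrmbfit{tup}(h,f,g)\;=\;\mathrmbfit{tup}_{\mathcal{E}_{1}}(h)\cdot\tau_{{\langle{f,g}\rangle}}(I_{2},s_{2})\;=\;(h{\cdot}{(\mbox{-})})\cdot({(\mbox{-})}{\cdot}g),
\]
so that $t_{1}\in\mathrmbfit{tup}_{\mathcal{E}_{1}}(I_{1},s_{1})$ is sent to the composite $I_{2}\xrightarrow{h}I_{1}\xrightarrow{t_{1}}Y_{1}\xrightarrow{g}Y_{2}$. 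Writing it as that two‑step composite is convenient because both steps are already available: reindexing along $h$ is the action of the tuple functor $\mathrmbfit{tup}_{\mathcal{E}_{1}}:(\mathrmbf{Set}{\downarrow}X_{1})^{\mathrm{op}}\rightarrow\mathrmbf{Set}$ on the arrow $h:{\langle{I_{2},s_{2}{\cdot}f}\rangle}\rightarrow{\langle{I_{1},s_{1}}\rangle}$, landing in the intermediate set $\mathrmbfit{tup}_{\mathcal{E}_{1}}(I_{2},s_{2}{\cdot}f)$, and pushing values along $g$ is the infomorphism's tuple‑transformation component $\tau_{{\langle{f,g}\rangle}}(I_{2},s_{2}):\mathrmbfit{tup}_{\mathcal{E}_{1}}(I_{2},s_{2}{\cdot}f)\rightarrow\mathrmbfit{tup}_{\mathcal{E}_{2}}(I_{2},s_{2})$.

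The substantive point is that this output really is an $\mathcal{E}_{2}$‑tuple of signature ${\langle{I_{2},s_{2}}\rangle}$, and this is precisely the chain of equivalences already displayed in Figure~\ref{table:morphism}. Fix $t_{1}\models_{\mathcal{E}_{1}}s_{1}$ and $i_{2}\in I_{2}$, and set $i_{1}=h(i_{2})$. The value at $i_{2}$ is $g(t_{1}(i_{1}))$, and using the schema‑morphism identity $f(s_{2}(i_{2}))=s_{1}(h(i_{2}))=s_{1}(i_{1})$ together with the fundamental (adjointness) property of the infomorphism ${\langle{f,g}\rangle}$,
\[
g(t_{1}(i_{1}))\models_{\mathcal{E}_{2}}s_{2}(i_{2})\;\Longleftrightarrow\;t_{1}(i_{1})\models_{\mathcal{E}_{1}}f(s_{2}(i_{2}))\;\Longleftrightarrow\;t_{1}(i_{1})\models_{\mathcal{E}_{1}}s_{1}(i_{1}),
\]
and the right‑hand side holds since $t_{1}\models_{\mathcal{E}_{1}}s_{1}$. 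As $i_{2}$ was arbitrary, $\mathrmbfit{tup}(h,f,g)(t_{1})\in\mathrmbfit{tup}_{\mathcal{E}_{2}}(I_{2},s_{2})=\mathrmbfit{tup}(\mathcal{S}_{2})$, which is the first assertion of the lemma.

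For the functor claim I would check functoriality of $\mathrmbfit{tup}:\mathrmbf{Sdsgn}^{\mathrm{op}}\rightarrow\mathrmbf{Set}$ directly. The identity semidesignation morphism on $\mathcal{S}$ has components ${\langle{\mathrmit{id}_{I},\mathrmit{id}_{X},\mathrmit{id}_{Y}}\rangle}$, and $(\mathrmit{id}_{I}{\cdot}{(\mbox{-})})\cdot({(\mbox{-})}{\cdot}\mathrmit{id}_{Y})$ is the identity of $\mathrmbfit{tup}(\mathcal{S})$. Semidesignation morphisms compose component‑wise (schema morphisms in $\mathrmbf{Sch}$, infomorphisms in $\mathrmbf{Cls}$), so for ${\langle{h',f',g'}\rangle}:\mathcal{S}_{3}\rightarrow\mathcal{S}_{2}$ followed by ${\langle{h,f,g}\rangle}:\mathcal{S}_{2}\rightarrow\mathcal{S}_{1}$ the composite has index map $h'{\cdot}h$ and instance map $g{\cdot}g'$, and by associativity of function composition
\[
(h'{\cdot}h){\cdot}t_{1}{\cdot}(g{\cdot}g')\;=\;h'{\cdot}\bigl(h{\cdot}t_{1}{\cdot}g\bigr){\cdot}g',
\]
which is $\mathrmbfit{tup}(h',f',g')$ applied to $\mathrmbfit{tup}(h,f,g)(t_{1})$ — the correct contravariant order. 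Hence $\mathrmbfit{tup}$ is a functor on $\mathrmbf{Sdsgn}^{\mathrm{op}}$.

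I do not expect a genuine obstacle here; the only thing demanding care is bookkeeping the variances — $h$ runs $I_{2}\rightarrow I_{1}$ while $g$ runs $Y_{1}\rightarrow Y_{2}$, so the assignment is contravariant in the semidesignation, matching the $(-)^{\mathrm{op}}$ in the statement — and making the two ingredients mesh in the well‑definedness step: the slice‑category condition $s_{2}{\cdot}f=h{\cdot}s_{1}$ on $h$ is exactly what converts the infomorphism biconditional into the implication ``$t_{1}\models_{\mathcal{E}_{1}}s_{1}$ entails $h{\cdot}t_{1}{\cdot}g\models_{\mathcal{E}_{2}}s_{2}$''. That one computation is the real content of the lemma.
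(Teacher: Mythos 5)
Your proposal is correct and follows exactly the construction the paper itself uses: it defines $\mathrmbfit{tup}(h,f,g) \doteq \mathrmbfit{tup}_{\mathcal{E}_{1}}(h)\cdot\tau_{{\langle{f,g}\rangle}}(I_{2},s_{2}) = (h{\cdot}{(\mbox{-})})\cdot({(\mbox{-})}{\cdot}g)$ as in the table-morphism definition, and your well-definedness check is precisely the infomorphism biconditional displayed in Figure~\ref{table:morphism} combined with the slice condition $s_{2}{\cdot}f = h{\cdot}s_{1}$. The functoriality verification (identities and contravariant composition by associativity) is routine and matches what the paper leaves implicit, so there is nothing to correct.
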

\begin{proposition}
The category of tables is the comma category 
{\footnotesize\[
\mathrmbf{Set} \xleftarrow{\mathrmbfit{key}} \mathrmbf{Tbl}=(\mathrmbf{Set}{\downarrow}\mathrmbfit{tup})
\xrightarrow{\mathrmbfit{sdsgn}} \mathrmbf{Sdsgn}^{\mathrm{op}}
\xrightarrow{\mathrmbfit{cls}^{\mathrm{op}}} \mathrmbf{Cls}^{\mathrm{op}}
\]\normalsize}
\hspace{-6pt}
associated with the tuple functor
$\mathrmbfit{tup} : \mathrmbf{Sdsgn}^{\mathrm{op}} \rightarrow \mathrmbf{Set}$.
The category of tables is cocomplete.
\end{proposition}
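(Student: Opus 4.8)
The plan is to prove the two assertions in turn, deriving cocompleteness from the comma-category presentation. For the identification of $\mathrmbf{Tbl}$ with the displayed comma category I would simply unwind the definitions. An object $\mathcal{T} = \langle\mathcal{S},\mathcal{E},K,t\rangle$ of $\mathrmbf{Tbl}$ has $\mathcal{S} = \langle X,I,s\rangle$ and $\mathcal{E} = \langle X,Y,\models_{\mathcal{E}}\rangle$ sharing the type set $X$, so it is exactly the data of a semidesignation $\langle I,s,\mathcal{E}\rangle \in \mathrmbf{Sdsgn}$ together with a set $K$ and a function $t : K \rightarrow \mathrmbfit{tup}(\langle I,s,\mathcal{E}\rangle) = \mathrmbfit{tup}_{\mathcal{E}}(I,s)$; that is precisely an object of $(\mathrmbf{Set}{\downarrow}\mathrmbfit{tup})$ for the tuple functor $\mathrmbfit{tup} : \mathrmbf{Sdsgn}^{\mathrm{op}} \rightarrow \mathrmbf{Set}$ furnished by the Lemma. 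A table morphism $\langle h,f,g,k\rangle : \mathcal{T}_{1}\rightarrow\mathcal{T}_{2}$ consists of a key function $k : K_{1}\rightarrow K_{2}$ and of a schema morphism $\langle h,f\rangle : \mathcal{S}_{2}\rightarrow\mathcal{S}_{1}$ together with an entity infomorphism $\langle f,g\rangle : \mathcal{E}_{2}\rightleftarrows\mathcal{E}_{1}$ sharing the type function $f$ — which is exactly a semidesignation morphism $\langle h,f,g\rangle : \mathcal{S}_{2}\rightarrow\mathcal{S}_{1}$, i.e. a morphism of $\mathrmbf{Sdsgn}^{\mathrm{op}}$ — and the compatibility condition $k\cdot t_{2} = t_{1}\cdot\mathrmbfit{tup}(h,f,g)$ is precisely the commuting square defining a morphism of the comma category, since $\mathrmbfit{tup}(h,f,g)$ is the value of $\mathrmbfit{tup}$ on the morphism $\langle h,f,g\rangle$. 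Composition and identities are component-wise on both sides, so this correspondence is an isomorphism of categories, under which $\mathrmbfit{key}$ and $\mathrmbfit{sdsgn}$ become the two comma-category projections.

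For cocompleteness I would use the dual of the comma-category completeness fact recalled earlier (in the proof that each $\mathrmbf{Tbl}(\mathcal{E})$ is complete), in the form: if $\mathcal{A}$ and $\mathcal{B}$ are cocomplete and $T : \mathcal{A}\rightarrow\mathcal{C}$ is cocontinuous, then the comma category $(T{\downarrow}S)$ is cocomplete and both of its projections are cocontinuous — a colimit being obtained by forming the colimits $a^{\ast} = \mathrm{colim}\,d_{\mathcal{A}}$ in $\mathcal{A}$ and $b^{\ast} = \mathrm{colim}\,d_{\mathcal{B}}$ in $\mathcal{B}$ component-wise and equipping $(a^{\ast},b^{\ast})$ with the structure map $Ta^{\ast} = \mathrm{colim}\,Td_{\mathcal{A}}\rightarrow Sb^{\ast}$ induced by the universal property of the colimit. (Here only cocontinuity of $T$ is needed; $S$ need not be cocontinuous, which is what we want, since $\mathrmbfit{tup}$, in the role of $S$, is not cocontinuous.) Applying this with $T$ the identity functor on $\mathrmbf{Set}$ (trivially cocontinuous), $\mathcal{A} = \mathrmbf{Set}$, $S = \mathrmbfit{tup}$ and $\mathcal{B} = \mathrmbf{Sdsgn}^{\mathrm{op}}$, the cocompleteness of $\mathrmbf{Tbl}\cong(\mathrmbf{Set}{\downarrow}\mathrmbfit{tup})$ reduces to that of $\mathrmbf{Sdsgn}^{\mathrm{op}}$, i.e. to the completeness of $\mathrmbf{Sdsgn}$.

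To see that $\mathrmbf{Sdsgn}$ is complete I would identify it with the strict fibre product $\mathrmbf{Sch}\times_{\mathrmbf{Set}}\mathrmbf{Cls}$ of the two type functors $\mathrmbf{Sch}\rightarrow\mathrmbf{Set}$ and $\mathrmbf{Cls}\rightarrow\mathrmbf{Set}$: an object of the fibre product is a schema and a classification with literally equal type set, and a morphism is a schema morphism and an infomorphism with literally equal type function — exactly a semidesignation and a semidesignation morphism. Now $\mathrmbf{Sch}$ is isomorphic to the arrow category of $\mathrmbf{Set}$ (objects are functions $s : I\rightarrow X$, morphisms are commuting squares), hence is complete with limits computed pointwise; $\mathrmbf{Cls}$ is complete (already used in the proof that $\mathrmbf{Tbl}$ is complete); and both type functors are continuous — the schema one because it is evaluation at the codomain, and the classification one because, with the variance of infomorphisms in force here, the product classification $\mathcal{E}_{1}\times\mathcal{E}_{2}$ has type set $X_{1}\times X_{2}$ and an equalizer of two infomorphisms has type set the equalizer of their type functions, so products and equalizers, hence all small limits, are preserved. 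Choosing the limits in $\mathrmbf{Sch}$ and in $\mathrmbf{Cls}$ so that their type-set component is literally the limit in $\mathrmbf{Set}$, the two glue along that common type set to a semidesignation which is the limit in $\mathrmbf{Sdsgn}$. Hence $\mathrmbf{Sdsgn}$ is complete, $\mathrmbf{Sdsgn}^{\mathrm{op}}$ is cocomplete, and $\mathrmbf{Tbl}$ is cocomplete; concretely, a ``union'' of tables is formed by taking the colimit of the key sets in $\mathrmbf{Set}$, the limit of the underlying semidesignations, and the induced tuple function between them.

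Apart from the definitional bookkeeping of the first two paragraphs, the step I expect to demand the most care is the verification that the type functor $\mathrmbf{Cls}\rightarrow\mathrmbf{Set}$ is continuous: one must check the actual variance of infomorphisms rather than reflexively assume the opposite convention (under which products of classifications would add, not multiply, type sets, and the type functor would fail to be continuous), and one must make sure that the fibre-product description of $\mathrmbf{Sdsgn}$ is the strict one, so that the component limits in $\mathrmbf{Sch}$ and $\mathrmbf{Cls}$ really do glue rather than merely glue up to isomorphism.
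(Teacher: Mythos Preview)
Your proposal is correct and follows essentially the same route as the paper: identify $\mathrmbf{Tbl}$ with the comma category $(\mathrmbf{Set}{\downarrow}\mathrmbfit{tup})$, invoke the comma-category cocompleteness lemma (with $T=\mathrmit{id}_{\mathrmbf{Set}}$ cocontinuous and $S=\mathrmbfit{tup}$ arbitrary), and reduce to $\mathrmbf{Sdsgn}$ being complete. The paper's own proof simply asserts that $\mathrmbf{Sdsgn}$ is complete without justification, whereas you go further and supply an argument via the fibre-product description $\mathrmbf{Sch}\times_{\mathrmbf{Set}}\mathrmbf{Cls}$; this is additional detail the paper omits rather than a different strategy.
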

\begin{proof}
The opposite category of semidesignations $\mathrmbf{Sdsgn}^{\mathrm{op}}$ is cocomplete,
since $\mathrmbf{Sdsgn}$ is complete.
\footnote{If $\mathcal{A}$ and $\mathcal{B}$ are cocomplete, 
$T : \mathcal{A} \rightarrow \mathcal{C}$ is a cocontinuous functor, and 
$S : \mathcal{B} \rightarrow \mathcal{C}$ is any functor (not necessarily cocontinuous), 
then the comma category $(T{\downarrow}S)$ will also be cocomplete.}
\newline
\end{proof}

\section{Relational Databases}\label{relational:databases}

A \emph{relational database} 
$\mathcal{D} = {\langle{\mathcal{S},\mathcal{E},\mathrmbfit{K},\tau}\rangle}$
is a naturally connected diagram of tables.
It has an underlying relational database schema 
$\mathcal{S} = {\langle{\mathrmbf{R},X,\mathrmbfit{S}}\rangle}$ 
\footnote{A relational database schema
$\mathcal{S} = {\langle{\mathrmbf{R},X,\mathrmbfit{S}}\rangle}$
consists of 
a category of relation symbols $\mathrmbf{R}$, 
a set of entity types $X$, and
a signature functor $\mathrmbfit{S} : \mathrmbf{R} \rightarrow (\mathrmbf{Set}{\downarrow}X)$.
Any relational database schema
$\mathcal{S} = {\langle{\mathrmbf{R},X,\mathrmbfit{S}}\rangle}$
with colimit reference schema ${\langle{I,s}\rangle} = \coprod\mathrmbfit{S}$,
defines a type language $\mathrmbfit{lang}(\mathrmbfit{S})$
in the Information Flow Framework~\cite{iff}
with reference component ${\langle{X,I,s}\rangle}$
and signature component ${\langle{\mathrmbf{R},\partial}\rangle}$.
We regard 
the colimit $X$-signature ${\langle{I,s}\rangle} = \coprod\mathrmbfit{S}$ 
to be a reference schema ${\langle{X,I,s}\rangle}$
with reference (sort) function $I \xrightarrow{s} X$
from a universal set of variables $I$ to the type set $X$.
For any relation symbol $r \in \mathrmbf{R}$,
the colimit injection
$\mathrmbfit{S}(r) = {\langle{I_{r},s_{r}}\rangle} \xrightarrow{\iota_{r}} {\langle{I,s}\rangle}$,
whose condition $s_{r} = \iota_{r} \cdot s$ expresses the $s$-alignment of $s_{r}$ via $\iota_{r}$,
states that 
the signature ${\langle{I_{r},s_{r}}\rangle}$
is below (at least as general as) 
the colimit signature ${\langle{I,s}\rangle}$.
The signature functor $\partial$ factors
$\mathrmbfit{S} = \partial \circ \mathrmbfit{inc} : \mathrmbf{R} \rightarrow (\mathrmbf{Set}{\downarrow}X)$
through 
$\mathrmbfit{sign}(I,s) \subseteq (\mathrmbf{Set}{\downarrow}X)$,
the subcategory of $X$-signatures below ${\langle{I,s}\rangle}$.}
with
a category of relation types (symbols) $\mathrmbf{R}$ linked by generalization-specialization, 
a set of entity types $X$, and
a signature functor
$\mathrmbfit{S} : \mathrmbf{R} \rightarrow (\mathrmbf{Set}{\downarrow}X)$,
an entity classification $\mathcal{E} = {\langle{X,Y,\models_{\mathcal{E}}}\rangle}$ 
with a common (entity) type set component $X$ and a local universe of entity instances $Y$,
a key functor $\mathrmbfit{K} : \mathrmbf{R}^{\mathrm{op}} \rightarrow \mathrmbf{Set}$, and
a tuple natural transformation
$\tau : \mathrmbfit{K} \Rightarrow \mathrmbfit{S}^{\mathrm{op}} \circ \mathrmbfit{tup}_{\mathcal{E}}$.
Equivalently,
it consists of a table functor 
$\mathrmbfit{T} : \mathrmbf{R}^{\mathrm{op}} \rightarrow (\mathrmbf{Set}{\downarrow}\mathrmbfit{tup}_{\mathcal{E}})$,
where $\tau = \mathrmbfit{T}\tau_{\mathcal{E}}$ and
$\tau_{\mathcal{E}} : 
\mathrmbfit{key}_{\mathcal{E}} \Rightarrow \mathrmbfit{sign}^{\mathrm{op}}_{\mathcal{E}} \circ \mathrmbfit{tup}_{\mathcal{E}}$
is the tuple natural transformation that is an integral component of the comma category 
$(\mathrmbf{Set}{\downarrow}\mathrmbfit{tup}_{\mathcal{E}})$.
Here are some examples of relational databases.
\begin{description}
\item[Table.]
A table (database relation) 
$\mathcal{T} = {\langle{\mathcal{S},\mathcal{E},K,t}\rangle}$
with
entity classification $\mathcal{E} = {\langle{X,Y,\models_{\mathcal{E}}}\rangle}$, 
schema $\mathcal{S} = {\langle{X,I,s}\rangle}$, 
tuple set $K$, and
tuple function $t : K \rightarrow \mathrmbfit{tup}_{\mathcal{E}}(I,s)$,
is a one-object relational database
with the same entity classification,
the terminal category of relation types (symbols) $\mathrmbf{1}=\{\ast\}$,
the signature functor with single $X$-signature
$\mathrmbf{1} \xrightarrow{{\langle{I,s}\rangle}} (\mathrmbf{Set}{\downarrow}X)$,
the key functor with single key set $\mathrmbf{1}^{\mathrm{op}}=\mathrmbf{1} \xrightarrow{K} \mathrmbf{Set}$,
the tuple natural transformation with single component tuple function
$t : K \rightarrow \mathrmbfit{tup}_{\mathcal{E}}(I,s)$, and
the table functor with single $\mathcal{E}$-table 
$\mathrmbf{1}^{\mathrm{op}}=\mathrmbf{1} \xrightarrow{\mathcal{T}} (\mathrmbf{Set}{\downarrow}\mathrmbfit{tup}_{\mathcal{E}})$.
\item[Classification.] 
A classification $\mathcal{E} = {\langle{X,Y,\models_{\mathcal{E}}}\rangle}$
is a relational database
$\mathrmbfit{db}(\mathcal{E}) = {\langle{\geq_{\mathcal{E}},{\uparrow}_{\mathcal{E}},\mathcal{E},\mathrmbfit{ext}_{\mathcal{E}},\tau_{\mathcal{E}}}\rangle}$,
where the entity classification is itself
($\mathrmbfit{db} \circ \mathrmbfit{cls} = \mathrmbfit{id}_{\mathrmbf{Cls}}$).
~\footnote{Since any preorder $\mathcal{P} = {\langle{P,\leq}\rangle}$
is a classification $\mathcal{P} = {\langle{P,P,\leq}\rangle}$,
a preorder is a relational database.}
The additional components are described as follows. 
The category of relation types (symbols) 
is the reverse conceptual preorder (generalization-specialization order) on entity types 
${\langle{X,\geq_{\mathcal{E}}}\rangle}$ 
with $x' \geq_{\mathcal{E}} x$ when $\mathrmbfit{ext}_{\mathcal{E}}(x') \supseteq \mathrmbfit{ext}_{\mathcal{E}}(x)$;
\makebox(0,0){
\setlength{\unitlength}{0.45pt}
\begin{picture}(0,0)(0,0)
\put(-300,-100){\begin{picture}(0,0)(0,0)
\put(-10,90){\makebox(0,0){\footnotesize{$\mathcal{E}$}}}
\put(60,120){\makebox(0,0){\shortstack{\scriptsize{$X$}\\$\overbrace{\rule{50pt}{0pt}}$}}}
\put(-40,40){\makebox(0,0){\scriptsize{$Y\left\{\rule{0pt}{20pt}\right.$}}}
\put(60,90){\makebox(0,0){\scriptsize{$x$}}}
\put(-10,40){\makebox(0,0){\scriptsize{$y$}}}
\put(60,40){\makebox(0,0){\tiny{$\times$}}}
\put(100,93){\makebox(0,0){\scriptsize{$x'$}}}
\put(-7,10){\makebox(0,0){\scriptsize{$y'$}}}
\put(100,10){\makebox(0,0){\tiny{$\times$}}}
\qbezier[10](65,40)(93,40)(120,40)
\qbezier[7](60,35)(60,17)(60,0)
\put(80,90){\makebox(0,0){\tiny{$\leq$}}}
\put(-10,25){\makebox(0,0){\tiny{$\geq$}}}
\put(-20,78){\line(1,0){140}}
\put(2,0){\line(0,1){100}}
\put(53,-30){\makebox(0,0){\scriptsize{$y' \leq_{\mathcal{E}} y \models_{\mathcal{E}} x \leq_{\mathcal{E}} x'$}}}
\put(25,-80){\shortstack{
\scriptsize{$\mathrmbfit{ext}_{\mathcal{E}}(x) \subseteq \mathrmbfit{ext}_{\mathcal{E}}(x')$} \\
\scriptsize{${\uparrow}x \supseteq {\uparrow}x'$}}}
\end{picture}}
\end{picture}
}
that is, 
when $x'$ is at least as general as $x$.
The signature functor is the principle filter operator 
${\langle{X,\geq_{\mathcal{E}}}\rangle} \xrightarrow{{\uparrow}_{\mathcal{E}}} (\mathrmbf{Set}{\downarrow}X)$;
it maps an entity type $x \in X$ 
to the $X$-signature ${\langle{{\uparrow}x,\mathrmbfit{inc}_{x}}\rangle}$,
where ${\uparrow}x = \{ x' \in X \mid x' \geq_{\mathcal{E}} x \}$ is the principle filter of $x$
and $\mathrmbfit{inc}_{x} : {\uparrow}x \hookrightarrow X$ is inclusion; and
it maps an ordering $x' \geq_{\mathcal{E}} x$ with ${\uparrow}x' \subseteq {\uparrow}x$
to the arity inclusion function $\mathrmbfit{inc}_{x',x} : {\uparrow}x' \hookrightarrow {\uparrow}x$.
For any an entity type $x \in X$, 
the $\mathcal{E}$-tuple functor applied to the $X$-signature ${\langle{{\uparrow}x,\mathrmbfit{inc}_{x}}\rangle}$
is the tuple set
$\mathrmbfit{tup}_{\mathcal{E}}({\uparrow}x,\mathrmbfit{inc}_{x})
= \{ {\uparrow}x \xrightarrow{t} Y \mid t(x') \models_{\mathcal{E}} x', \forall x' \geq_{\mathcal{E}} x \}$.
The key functor is the extent operator
${\langle{X,\leq_{\mathcal{E}}}\rangle} \xrightarrow{\mathrmbfit{ext}_{\mathcal{E}}} \mathrmbf{Set}$,
which maps an entity type $x \in X$ 
to its extent $\mathrmbfit{ext}_{\mathcal{E}}(x) = \{ y \in Y \mid y \models_{\mathcal{E}} x \}$ 
and maps the ordering $x \leq_{\mathcal{E}} x'$ 
with $\mathrmbfit{ext}_{\mathcal{E}}(x) \subseteq \mathrmbfit{ext}_{\mathcal{E}}(x')$
to the extent inclusion function 
$\mathrmbfit{ext}_{\mathcal{E}}(x,x') : 
\mathrmbfit{ext}_{\mathcal{E}}(x) \hookrightarrow \mathrmbfit{ext}_{\mathcal{E}}(x')$.
For any entity type $x \in X$,
the tuple function
$\tau_{\mathcal{E}}(x) : \mathrmbfit{ext}_{\mathcal{E}}(x) \rightarrow 
\mathrmbfit{tup}_{\mathcal{E}}({\uparrow}x,\mathrmbfit{inc}_{x})$
maps an entity instance
$y \in \mathrmbfit{ext}_{\mathcal{E}}(x)$
to the constant tuple ${\uparrow}x \xrightarrow{y} Y$.
This defines a natural transformation
$\tau_{\mathcal{E}} 
: \mathrmbfit{ext}_{\mathcal{E}} \Rightarrow 
{({\uparrow}_{\mathcal{E}})}^{\mathrm{op}} \circ \mathrmbfit{tup}_{\mathcal{E}}$.
\item[IFF Structure.]
Using the key functor,
we can define the relation classification 
$\mathcal{R} = {\langle{R,K,\models_{\mathcal{R}}}\rangle}$ 
with 
type set $R = \mathrmbfit{obj}(\mathrmbf{R})$,
instance set $K = \bigcup_{r \in \mathrmbf{R}} \mathrmbfit{K}(r)$, and 
incidence with $k \models_{\mathcal{R}} r$ when $k \in \mathrmbfit{K}(r)$
for key $k \in K = \bigcup_{r \in \mathrmbf{R}} \mathrmbfit{K}(r)$ 
and relation symbol $r \in R = \mathrmbfit{obj}(\mathrmbf{R})$.
The elements (keys) in $K$ are called abstract tuples in the Information Flow Framework~\cite{iff}. 
Any relational database
$\mathcal{D} = {\langle{\mathcal{S},\mathcal{E},\mathrmbfit{K},\tau}\rangle}$
determines an structure (model) $\mathrmbfit{struc}(\mathrmbfit{D})$
in the Information Flow Framework~\cite{iff}
with type language $\mathrmbfit{lang}(\mathrmbfit{S})$,
entity classification $\mathcal{E}$,
semidesignation ${\langle{I,s,\mathcal{E}}\rangle}$ and
relation classification $\mathcal{R}$.
This is an adjoint situation:
any IFF structure determines a relational database.
\item[Unified Database.]
A unified database is a special case of a database,
whose relation classification coincides with its entity classification $\mathcal{R} = \mathcal{E}$.
Unified databases allow the introduction of foreign keys.
In fact,
columns are either the single primary key or a foreign key.
The entries in a column are keys of the type of the column.
Actual datatypes, such as strings or numbers,
can be regarded as primary keys of themselves.
Conversely,
we can think of any relational table with a single column,
one whose schema is of the form $1 \xrightarrow{x} X$,
to be a set of entities.
\end{description}
Any relational database schema $\mathcal{S} = {\langle{\mathrmbf{R},X,\mathrmbfit{S}}\rangle}$ 
in unified form ($R = \mathrmbfit{obj}(\mathrmbf{R}) = X$)
has an associated sketch. 
Define the arity functor 
$\mathrmbfit{A} = \mathrmbfit{S} \circ \mathrmbfit{set}_{X} 
: \mathrmbf{R} \rightarrow (\mathrmbf{Set}{\downarrow}X) \rightarrow \mathrmbf{Set}$,
Let $\int\mathrmbfit{A} \xrightarrow{\mathrmbfit{pr}} \mathrmbf{R}$ 
denote the Grothendieck construction of $\mathrmbfit{A}$
with object set 
$\coprod_{r \in R} \mathrmbfit{A}(r) = \{ (r,i) \mid r \in R, i \in I, {\langle{I,s}\rangle} = \mathrmbfit{S}(r) \}$
and morphisms $(r',i') \xrightarrow{p} (r,i)$
for $\mathrmbf{R}$-constraints $r' \xrightarrow{p} r$.
The graph $\mathrmbfit{gph}(\mathcal{S})$ of the sketch
has node set $R$ and edges $(r,i) \in \int\mathrmbfit{A}$
with source and target $r \xrightarrow{(r,i)} s(i)$.
This graph is actually 2-dimensional, 
given the $\mathrmbf{R}$-constraints.
The sketch specifies a cone for the signature
of each relation type $r \in \mathrmbf{R}$
and constraints for the commuting diagrams in $\mathrmbf{R}$.
Any relational database $\mathcal{D} = {\langle{\mathcal{S},\mathcal{E},\mathrmbfit{K},\tau}\rangle}$ 
in unified form
has an associated sketch interpretation 
$\mathrmbfit{gph}(\mathcal{S})^{\mathrm{op}} \rightarrow \mathrmbf{Set}$.
The interpretation maps a node (relation type) $r \in \mathrmbf{R}$ to
$\mathrmbfit{K}(r)$ the set of keys of $r$ and
maps an edge $r \xrightarrow{(r,i)} s(i)$ 
to the map 
$\mathrmbfit{K}(r) \rightarrow \mathrmbfit{K}(s(i)) : k \mapsto \tau_{r}(k)(i)$,
where $\tau_{r}(k) \in \mathrmbfit{tup}_{\mathcal{E}}(I,s)$.
This also is 2-dimensional.


A \emph{relational database morphism}
${\langle{\mathrmbfit{F},\theta,f,g,\kappa}\rangle} :
\mathcal{D}_{2} = {\langle{\mathcal{S}_{2},\mathcal{E}_{2},\mathrmbfit{K}_{2},\tau_{2}}\rangle} \rightarrow
{\langle{\mathcal{S}_{1},\mathcal{E}_{1},\mathrmbfit{K}_{1},\tau_{1}}\rangle} = \mathcal{D}_{1}$
consists of 
a relational database schema morphism
${\langle{\mathrmbfit{F},\theta,f}\rangle} : \mathcal{S}_{2} \rightarrow \mathcal{S}_{1}$
\footnote{A relational database schema morphism
${\langle{\mathrmbfit{F},\theta,f}\rangle} : 
\mathcal{S}_{2} = {\langle{\mathrmbf{R}_{2},X_{2},\mathrmbfit{S}_{2}}\rangle} \rightarrow 
{\langle{\mathrmbf{R}_{1},X_{1},\mathrmbfit{S}_{1}}\rangle} = \mathcal{S}_{1}$
consists of 
a relation functor $\mathrmbfit{F} : \mathrmbf{R}_{2} \rightarrow \mathrmbf{R}_{1}$,
a function on entity types $f : X_{2} \rightarrow X_{1}$ and
a signature natural transformation
$\theta : \mathrmbfit{S}_{2} \circ {\scriptstyle\sum}_{f} \Rightarrow \mathrmbfit{F} \circ \mathrmbfit{S}_{1}$.
Any (strict) relational database schema morphism
${\langle{\mathrmbfit{F},f}\rangle} : \mathcal{S}_{2} \rightarrow \mathcal{S}_{1}$
determines a type language morphism 
$\mathrmbfit{lang}(\mathrmbfit{F},f) 
: \mathrmbfit{lang}(\mathrmbfit{S}_{2}) \rightarrow \mathrmbfit{lang}(\mathrmbfit{S}_{1})$
in the Information Flow Framework~\cite{iff},
since we have the commutative diagram
$\mathrmbfit{F} \circ \partial_{1} = \partial_{2} \circ \mathrmbfit{sign}(\coprod\mathrmbfit{F},f)$.}
an entity infomorphism 
${\langle{f,g}\rangle} : 
\mathcal{E}_{2} = {\langle{X_{2},Y_{2},\models_{\mathcal{E}_{2}}}\rangle} \rightleftarrows 
{\langle{X_{1},Y_{1},\models_{\mathcal{E}_{1}}}\rangle} = \mathcal{E}_{1}$
with a common (entity) type function component $f : X_{2} \rightarrow X_{1}$
and a universe (entity instance) function $g : X_{1} \rightarrow X_{2}$, and
a key natural transformation
$\kappa : \mathrmbfit{F}^{\mathrm{op}} \circ \mathrmbfit{K}_{1} \Rightarrow \mathrmbfit{K}_{2}$,
which satisfy the condition
$\kappa \bullet \tau_{2}
= \mathrmbfit{F}^{\mathrm{op}}\tau_{1} \bullet \theta^{\mathrm{op}}_{{\langle{f,g}\rangle}}$,
where
$\theta^{\mathrm{op}}_{{\langle{f,g}\rangle}} \doteq
\theta^{\mathrm{op}}\mathrmbfit{tup}_{\mathcal{E}_{1}} \bullet \mathrmbfit{S}_{2}^{\mathrm{op}}\tau_{{\langle{f,g}\rangle}}$.
\footnote{$\tau_{{\langle{f,g}\rangle}}$ 
is the tuple natural transformation in the morphism of presheaves
$\mathrmbfit{tup}(f,g) = {\langle{{\scriptstyle\sum}_{f},f^{\ast},\tau_{{\langle{f,g}}\rangle}}\rangle} : 
{\langle{(\mathrmbf{Set}{\downarrow}X_{2}),\mathrmbfit{tup}_{\mathcal{E}_{2}}}\rangle} \rightleftarrows
{\langle{(\mathrmbf{Set}{\downarrow}X_{1}),\mathrmbfit{tup}_{\mathcal{E}_{1}}}\rangle}$
coming from the tuple functor
$\mathrmbfit{tup} : \mathrmbf{Cls} \rightarrow (\mathrmbf{Adj}{\Downarrow}\mathrmbf{Set})$.}
It is strict or trim when 
the underlying relational database schema morphism is strict or trim
($\theta = 1$).
Figure~\ref{relational:database:morphism} illustrates in detail a relational database morphism.
Here are some examples of relational database morphisms.
\begin{description}
\item[Table morphism.]
A relational database morphism
${\langle{\mathrmbfit{F},\theta,f,g,\kappa}\rangle}$
with one-object source and target categories of relations
and identity relation functor
$\mathrmbfit{F} = \mathrmbfit{id}_{\mathrmbf{1}} : \mathrmbf{1} \rightarrow \mathrmbf{1}$
is identical to a single morphism of tables
${\langle{k,h,f,g}\rangle} : 
{\langle{K_{1},t_{1},I_{1},s_{1},\mathcal{E}_{1}}\rangle} \rightarrow 
{\langle{K_{2},t_{2},I_{2},s_{2},\mathcal{E}_{2}}\rangle}$,
except that the direction has switched.
\item[Infomorphism.] 
An entity infomorphism 
${\langle{f,g}\rangle} : 
\mathcal{E}_{2} = {\langle{X_{2},Y_{2},\models_{\mathcal{E}_{2}}}\rangle} \rightleftarrows 
{\langle{X_{1},Y_{1},\models_{\mathcal{E}_{1}}}\rangle} = \mathcal{E}_{1}$
is a relational database morphism
between the classifications regarded as relational databases,
where the following hold.
~\footnote{Since any pair of adjoint monotonic functions 
${\langle{f,g}\rangle} : {\langle{P_{2},\leq_{2}}\rangle} \rightleftarrows {\langle{P_{1},\leq_{1}}\rangle}$
is an infomorphism 
${\langle{f,g}\rangle} : {\langle{P_{2},P_{2},\leq_{2}}\rangle} \rightleftarrows {\langle{P_{1},P_{1},\leq_{1}}\rangle}$,
such a pair is a relational database morphism.}
%
%
The type function is monotonic
$f : {\langle{X_{2},\geq_{\mathcal{E}_{2}}}\rangle} \rightarrow {\langle{X_{1},\geq_{\mathcal{E}_{1}}}\rangle}$
mapping an ordering 
$x_{2}' \geq_{\mathcal{E}_{2}} x_{2}$ 
with $\mathrmbfit{ext}_{\mathcal{E}_{2}}(x_{2}') \supseteq \mathrmbfit{ext}_{\mathcal{E}_{2}}(x_{2})$
to the ordering $f(x_{2}') \geq_{\mathcal{E}_{1}} f(x_{2})$,
with $\mathrmbfit{ext}_{\mathcal{E}_{1}}(f(x_{2}')) \supseteq \mathrmbfit{ext}_{\mathcal{E}_{1}}(f(x_{2}))$,
since
$y_{1} \models_{\mathcal{E}_{1}} f(x_{2})$ iff 
$g(y_{1}) \models_{\mathcal{E}_{2}} x_{2}$ implies 
$g(y_{1}) \models_{\mathcal{E}_{2}} x_{2}'$ iff
$y_{1} \models_{\mathcal{E}_{1}} f(x_{2}')$.
%
For each type $x_{2} \in X_{2}$,
the type function $X_{2} \xrightarrow{f} X_{1}$
restricts to an arity function
${\langle{{\uparrow}_{\mathcal{E}_{1}}(x_{2}),\mathrmbfit{inc}_{x_{2}}{\cdot}f}\rangle}
\xrightarrow{f} {\langle{{\uparrow}_{\mathcal{E}_{1}}(f(x_{2})),\mathrmbfit{inc}_{f(x_{2})}}\rangle}$.
This is the $x_{2}^{\mathrm{th}}$-component function
of a signature natural transformation
$\theta : {\uparrow}_{\mathcal{E}_{2}} \circ {\scriptstyle\sum}_{f} \Rightarrow f \circ {\uparrow}_{\mathcal{E}_{1}}$.
%
%
For each type $x_{2} \in X_{2}$,
the instance function $Y_{1} \xrightarrow{g} Y_{2}$
restricts to a function
$\mathrmbfit{ext}_{\mathcal{E}_{1}}(f(x_{2})) \xrightarrow{g} \mathrmbfit{ext}_{\mathcal{E}_{2}}(x_{2})$,
since
 an instance $y_{1} \in Y_{1}$ satisfying $y_{1} \models_{\mathcal{E}_{1}} f(x_{2})$
determines
the instance $g(y_{1}) \in Y_{2}$ satisfying $g(y_{1}) \models_{\mathcal{E}_{2}} x_{2}$.
This is the
$x_{2}^{\mathrm{th}}$-component function
of a key natural transformation
$\mathrmbfit{g} : 
{\langle{X_{2},\leq_{\mathcal{E}_{2}}}\rangle} 
\xrightarrow{f^{\mathrm{op}}} {\langle{X_{1},\leq_{\mathcal{E}_{1}}}\rangle}
\xrightarrow{\mathrmbfit{ext}_{\mathcal{E}_{1}}} \mathrmbf{Set}
\Rightarrow 
{\langle{X,\leq_{\mathcal{E}_{2}}}\rangle} \xrightarrow{\mathrmbfit{ext}_{\mathcal{E}_{2}}} \mathrmbf{Set}$.
%
%
Finally,
for any entity type $x_{2} \in X_{2}$,  
the two functions
{\footnotesize$
\mathrmbfit{ext}_{\mathcal{E}_{1}}(f(x_{2})) 
\xrightarrow{g} 
\mathrmbfit{ext}_{\mathcal{E}_{2}}(x_{2}) 
\xrightarrow{\Delta} 
\mathrmbfit{tup}_{\mathcal{E}_{2}}({\uparrow}x_{2},\mathrmbfit{inc}_{x_{2}}) 
$}
and
{\footnotesize$
\mathrmbfit{ext}_{\mathcal{E}_{1}}(f(x_{2})) 
\xrightarrow{\Delta}
\mathrmbfit{tup}_{\mathcal{E}_{1}}({\uparrow}f(x_{2}),\mathrmbfit{inc}_{f(x_{2})}) 
\xrightarrow{f{\cdot}{(\mbox{-})}}
\mathrmbfit{tup}_{\mathcal{E}_{1}}({\uparrow}x_{2},\mathrmbfit{inc}_{x_{2}}{\cdot}f) 
\xrightarrow{{(\mbox{-})} \cdot g}
\mathrmbfit{tup}_{\mathcal{E}_{2}}({\uparrow}x_{2},\mathrmbfit{inc}_{x_{2}}) 
$}
are equal.
%
%
\item[IFF Structure Morphism.]
Using the key natural transformation
$\kappa : \mathrmbfit{F}^{\mathrm{op}} \circ \mathrmbfit{K}_{1} \Rightarrow \mathrmbfit{K}_{2}$,
we can define the relation infomorphism 
${\langle{F,K}\rangle} :
\mathcal{R}_{2} = {\langle{R_{2},K_{2},\models_{\mathcal{R}_{2}}}\rangle} \rightleftarrows 
{\langle{R_{1},K_{1},\models_{\mathcal{R}_{1}}}\rangle} = \mathcal{R}_{1}$ 
with 
type function $F = \mathrmbfit{obj}(\mathrmbf{F}) : R_{2} \rightarrow R_{1}$ and
instance function $K : K_{1} \rightarrow K_{2} : k_{1} \mapsto \kappa_{r_{2}}(k_{1})$
using the $r_{2}^{\mathrm{th}}$ component function
$\kappa_{r_{2}} : \mathrmbf{K}_{1}(r_{1}) \rightarrow \mathrmbf{K}_{2}(r_{2})$
for each key $k_{1} \in \mathrmbf{K}_{1}(r_{1})$
of an image relation type $r_{1} = \mathrmbf{F}(r_{2})$ 
($K$ is defined by arbitrary choice, otherwise).
Any strict relational database morphism
${\langle{\mathrmbfit{F},f,g,\kappa}\rangle} :
\mathcal{D}_{2} = {\langle{\mathcal{S}_{2},\mathcal{E}_{2},\mathrmbfit{K}_{2},\tau_{2}}\rangle} \rightarrow
{\langle{\mathcal{S}_{1},\mathcal{E}_{1},\mathrmbfit{K}_{1},\tau_{1}}\rangle} = \mathcal{D}_{1}$
determines an structure (model) morphism
$\mathrmbfit{struc}(\mathrmbfit{F},f,g,\kappa) 
: \mathrmbfit{struc}(\mathrmbfit{S}_{2}) \rightarrow \mathrmbfit{struc}(\mathrmbfit{S}_{1})$
in the Information Flow Framework~\cite{iff}
with type language morphism 
$\mathrmbfit{lang}(\mathrmbfit{F},f) 
: \mathrmbfit{lang}(\mathrmbfit{S}_{2}) \rightarrow \mathrmbfit{lang}(\mathrmbfit{S}_{1})$,
entity infomorphism 
${\langle{f,g}\rangle} : 
\mathcal{E}_{2} = {\langle{X_{2},Y_{2},\models_{\mathcal{E}_{2}}}\rangle} \rightleftarrows 
{\langle{X_{1},Y_{1},\models_{\mathcal{E}_{1}}}\rangle} = \mathcal{E}_{1}$,
semidesignation morphism
${\langle{\coprod\mathrmbfit{F},f,g}\rangle} : 
{\langle{I_{2},s_{2},\mathcal{E}_{2}}\rangle} \rightarrow {\langle{I_{1},s_{1},\mathcal{E}_{1}}\rangle}$
and
relation infomorphism 
${\langle{F,K}\rangle} :
\mathcal{R}_{2} \rightleftarrows \mathcal{R}_{1}$. 
\end{description}
Composition of morphisms is defined component-wise.
Let $\mathrmbf{Db}$ denote the category of relational databases 
with the two projections ($\mathrmbfit{dbs}$ is called the schema functor)
$\mathrmbf{Dbs} \xleftarrow{\mathrmbfit{dbs}} \mathrmbf{Db}
\xrightarrow{\mathrmbfit{cls}} \mathrmbf{Cls}$
and the key functor
$\mathrmbf{Db} \xrightarrow{\mathrmbfit{key}} (\mathrmbf{Cat}{\Downarrow}\mathrmbf{Set})$
mapping $\mathcal{D}$ to ${\langle{\mathrmbf{R},\mathrmbfit{K}}\rangle}$
and
${\langle{\mathrmbfit{F},\theta,f,g,\kappa}\rangle} :
\mathcal{D}_{2} = {\langle{\mathcal{S}_{2},\mathcal{E}_{2},\mathrmbfit{K}_{2},\tau_{2}}\rangle} \rightarrow
{\langle{\mathcal{S}_{1},\mathcal{E}_{1},\mathrmbfit{K}_{1},\tau_{1}}\rangle} = \mathcal{D}_{1}$
to ${\langle{\mathrmbfit{F},\kappa}\rangle} : 
{\langle{\mathrmbf{R}_{2},\mathrmbfit{K}_{2}}\rangle} \rightarrow
{\langle{\mathrmbf{R}_{1},\mathrmbfit{K}_{1}}\rangle}$.
\begin{center}
\begin{tabular}{c}
\\ \\ \\
\setlength{\unitlength}{0.55pt}
\begin{picture}(320,80)(0,-10)
\put(0,80){\makebox(0,0){\footnotesize{$(\mathrmbf{Cat}{\Downarrow}\mathrmbf{Set})$}}}
\put(140,80){\makebox(0,0){\footnotesize{$\mathrmbf{Db}$}}}
\put(280,80){\makebox(0,0){\footnotesize{$\mathrmbf{Cls}$}}}
\put(0,0){\makebox(0,0){\footnotesize{$\mathrmbf{Cat}$}}}
\put(140,0){\makebox(0,0){\footnotesize{$\mathrmbf{Dbs}$}}}
\put(280,0){\makebox(0,0){\footnotesize{$\mathrmbf{Set}$}}}
\put(210,40){\makebox(0,0){\footnotesize{$\mathrmbf{SDsgn}$}}}
\put(80,92){\makebox(0,0){\scriptsize{$\mathrmbfit{key}$}}}
\put(210,92){\makebox(0,0){\scriptsize{$\mathrmbfit{cls}$}}}
\put(70,-12){\makebox(0,0){\scriptsize{$\mathrmbfit{rel}$}}}
\put(210,-12){\makebox(0,0){\scriptsize{$\mathrmbfit{typ}$}}}
\put(-8,40){\makebox(0,0)[r]{\scriptsize{$\mathrmbfit{top}$}}}
\put(132,40){\makebox(0,0)[r]{\scriptsize{$\mathrmbfit{sch}$}}}
\put(288,40){\makebox(0,0)[l]{\scriptsize{$\mathrmbfit{typ}$}}}
\put(140,115){\makebox(0,0){\tiny{${\langle{\mathcal{S}_{2},\mathcal{E}_{2},\mathrmbfit{K}_{2},\tau_{2}}\rangle} 
\xrightarrow{{\langle{\mathrmbfit{F},\theta,f,g,\kappa}\rangle}}
{\langle{\mathcal{S}_{1},\mathcal{E}_{1},\mathrmbfit{K}_{1},\tau_{1}}\rangle}$}}}
\put(-60,80){\makebox(0,0)[r]{\tiny{${\langle{\mathrmbf{R}_{2},\mathrmbfit{K}_{2}}\rangle} \xrightarrow{{\langle{\mathrmbfit{F},\kappa}\rangle}}{\langle{\mathrmbf{R}_{1},\mathrmbfit{K}_{1}}\rangle}$}}}
\put(310,80){\makebox(0,0)[l]{\tiny{${\langle{X_{2},Y_{2},\models_{\mathcal{E}_{2}}}\rangle} \xrightarrow{{\langle{f,g}\rangle}}{\langle{X_{1},Y_{1},\models_{\mathcal{E}_{1}}}\rangle}$}}}
\put(140,-25){\makebox(0,0){\tiny{$
{\langle{\mathrmbf{R}_{2},X_{2},\mathrmbfit{S}_{2}}\rangle} \xrightarrow{{\langle{\mathrmbfit{F},\theta,f}\rangle}}{\langle{\mathrmbf{R}_{1},X_{1},\mathrmbfit{S}_{1}}\rangle}$}}}
\put(-80,0){\makebox(0,0)[r]{\tiny{$\mathrmbf{R}_{2}\xrightarrow{\mathrmbfit{F}}\mathrmbf{R}_{1}$}}}
\put(360,0){\makebox(0,0)[l]{\tiny{$X_{2}\xrightarrow{f}X_{1}$}}}
\put(115,80){\vector(-1,0){60}}
\put(165,80){\vector(1,0){90}}
\put(115,0){\vector(-1,0){90}}
\put(165,0){\vector(1,0){90}}
\put(0,65){\vector(0,-1){50}}
\put(140,65){\vector(0,-1){50}}
\put(280,65){\vector(0,-1){50}}
\put(155,70){\vector(2,-1){36}}
\put(190,30){\vector(-2,-1){36}}
\put(230,50){\vector(2,1){36}}
\qbezier(230,20)(235,20)(240,20)
\qbezier(240,20)(240,25)(240,30)
\end{picture}
\\ \\
\end{tabular}
\end{center}
%


%
\begin{figure}
\begin{center}
\begin{tabular}{@{\hspace{-50pt}}c}
\\ \\
\begin{tabular}{@{\hspace{-70pt}}c@{\hspace{140pt}}c}
\begin{tabular}{c}
\put(0,0){\setlength{\unitlength}{0.5pt}
\begin{picture}(120,100)(0,0)
\put(0,100){\makebox(0,0){\scriptsize{$\mathrmbf{R}_{2}^{\mathrm{op}}$}}}
\put(0,0){\makebox(0,0){\scriptsize{$\mathrmbf{Set}$}}}
\put(120,100){\makebox(0,0){\scriptsize{$\mathrmbf{R}_{1}^{\mathrm{op}}$}}}
\put(120,0){\makebox(0,0){\scriptsize{$\mathrmbf{Set}$}}}
\put(-30,50){\makebox(0,0)[r]{\scriptsize{$\mathrmbfit{K}_{2}$}}}
\put(30,50){\makebox(0,0)[l]{\scriptsize{$\mathrmbfit{tup}_{2}$}}}
\put(90,50){\makebox(0,0)[r]{\scriptsize{$\mathrmbfit{K}_{1}$}}}
\put(150,50){\makebox(0,0)[l]{\scriptsize{$\mathrmbfit{tup}_{1}$}}}
\put(65,110){\makebox(0,0){\scriptsize{$\mathrmbfit{F}^{\mathrm{op}}$}}}
\put(60,-10){\makebox(0,0){\scriptsize{$\mathrmbfit{id}$}}}
\put(65,77){\makebox(0,0){\large{$\overset{\rule[-2pt]{0pt}{5pt}\;\;\;\;\theta^{\mathrm{op}}_{{\langle{f,g}\rangle}}}{\Longleftarrow}$}}}
\put(55,30){\makebox(0,0){\large{$\overset{\rule[-2pt]{0pt}{5pt}\kappa}{\Longleftarrow}$}}}
\put(0,55){\makebox(0,0){\large{$\overset{\rule[-2pt]{0pt}{5pt}\tau_{2}}{\Rightarrow}$}}}
\put(120,55){\makebox(0,0){\large{$\overset{\rule[-2pt]{0pt}{5pt}\tau_{1}}{\Rightarrow}$}}}
\put(20,100){\vector(1,0){80}}
\put(20,0){\vector(1,0){80}}
\qbezier(-15,85)(-40,50)(-15,15)\put(-15,15){\vector(2,-3){0}}
\qbezier(15,85)(40,50)(15,15)\put(15,15){\vector(-2,-3){0}}
\qbezier(105,85)(80,50)(105,15)\put(105,15){\vector(2,-3){0}}
\qbezier(135,85)(160,50)(135,15)\put(135,15){\vector(-2,-3){0}}
\end{picture}}
\end{tabular}
&
\begin{tabular}{c}
\put(0,0){\setlength{\unitlength}{0.5pt}
\begin{picture}(120,100)(0,-10)
\put(0,80){\makebox(0,0){\scriptsize{$\mathrmbfit{F}^{\mathrm{op}} \circ \mathrmbfit{K}_{1}$}}}
\put(0,0){\makebox(0,0){\scriptsize{$\mathrmbfit{K}_{2}$}}}
\put(125,94){\makebox(0,0){\scriptsize{$\overset{\textstyle\mathrmbfit{F}^{\mathrm{op}}\circ\mathrmbfit{tup}_{1}(\mathrmbf{R}_{1},\mathrmbfit{S}_{1})}{\overbrace{\mathrmbfit{F}^{\mathrm{op}} \circ \mathrmbfit{S}_{1}^{\mathrm{op}} \circ \mathrmbfit{tup}_{\mathcal{E}_{1}}}}$}}}
\put(125,-14){\makebox(0,0){\scriptsize{$\overset{\textstyle\underbrace{\mathrmbfit{S}_{2}^{\mathrm{op}}\circ\mathrmbfit{tup}_{\mathcal{E}_{2}}}}{\mathrmbfit{tup}_{2}(\mathrmbf{R}_{2},\mathrmbfit{S}_{2})}$}}}
\put(125,40){\makebox(0,0){\scriptsize{$\mathrmbfit{S}_{2}^{\mathrm{op}} \circ ({\scriptstyle\sum}_{f})^{\mathrm{op}} \circ \mathrmbfit{tup}_{\mathcal{E}_{1}}$}}}
\put(130,60){\makebox(0,0)[l]{\scriptsize{$\theta^{\mathrm{op}}\mathrmbfit{tup}_{\mathcal{E}_{1}}$}}}
\put(130,20){\makebox(0,0)[l]{\scriptsize{$\mathrmbfit{S}_{2}^{\mathrm{op}}{\tau}_{{\langle{f,g}\rangle}}$}}}
\put(47,95){\makebox(0,0){\scriptsize{$\mathrmbfit{F}^{\mathrm{op}}\tau_{1}$}}}
\put(47,-15){\makebox(0,0){\scriptsize{$\tau_{2}$}}}
\put(-10,40){\makebox(0,0)[r]{\scriptsize{$\kappa$}}}
\put(47,80){\makebox(0,0){\large{$\Rightarrow$}}}
\put(47,0){\makebox(0,0){\large{$\Longrightarrow$}}}
\put(120,60){\makebox(0,0){\large{$\Downarrow$}}}
\put(120,20){\makebox(0,0){\large{$\Downarrow$}}}
\put(0,40){\makebox(0,0){\large{$\Downarrow$}}}
\put(225,40){\makebox(0,0){\large{$\Downarrow$}}}
\put(235,54){\makebox(0,0)[l]{\scriptsize{$\overset{\textstyle\theta^{\mathrm{op}}_{{\langle{f,g}\rangle}}}
{\overbrace{\mathrmbfit{tup}_{{\langle{f,g}\rangle}}(\mathrmbfit{F},\theta)}}$}}}
\end{picture}}
\end{tabular}
\end{tabular}
\\ \\ \\
\setlength{\unitlength}{0.6pt}
\begin{picture}(120,240)(0,10)
%
\put(0,200){\makebox(0,0){\footnotesize{$\mathrmbf{R}_{2}^{\mathrm{op}}$}}}
\put(120,200){\makebox(0,0){\footnotesize{$\mathrmbf{R}_{1}^{\mathrm{op}}$}}}
\put(60,210){\makebox(0,0){\scriptsize{$\mathrmbfit{F}^{\mathrm{op}}$}}}
\put(20,200){\vector(1,0){80}}
\put(0,185){\vector(0,-1){90}}
\put(-16,188){\vector(-4,-3){48}}
\put(-64,128){\vector(4,-3){48}}
\put(-44,174){\makebox(0,0)[r]{\scriptsize{$\mathrmbfit{T}_{2}$}}}
\put(6,140){\makebox(0,0)[l]{\scriptsize{$\mathrmbfit{S}_{2}^{\mathrm{op}}$}}}
\put(-132,120){\makebox(0,0)[r]{\scriptsize{$\mathrmbfit{K}_{2}$}}}
\put(-44,108){\makebox(0,0)[r]{\scriptsize{$\mathrmbfit{sign}^{\mathrm{op}}_{\mathcal{E}_{2}}$}}}
\put(-67,60){\makebox(0,0)[r]{\shortstack{\scriptsize{$\mathrmbfit{key}_{\mathcal{E}_{2}}$}}}}
\put(120,185){\vector(0,-1){90}}
\put(136,188){\vector(4,-3){48}}
\put(184,128){\vector(-4,-3){48}}
\qbezier(-20,200)(-130,180)(-130,120)\qbezier(-130,120)(-130,30)(35,3)\put(35,3){\vector(4,-1){0}}
\qbezier(140,200)(250,180)(250,120)\qbezier(250,120)(250,30)(85,3)\put(85,3){\vector(-4,-1){0}}
\qbezier(-84,125)(-80,20)(40,10)\put(40,10){\vector(1,0){0}}
\qbezier(204,125)(200,20)(80,10)\put(80,10){\vector(-1,0){0}}
	\put(60,148.5){\makebox(0,0){\large{$\overset{\rule[-2pt]{0pt}{5pt}\;\;\;\;{\theta}^{\mathrm{op}}}{\Leftarrow}$}}}
\qbezier[50](0,140)(60,140)(120,140)
\put(35,115){\makebox(0,0){\large{$\overset{\rule[-2pt]{0pt}{5pt}\kappa}{\Longleftarrow}$}}}
\qbezier[150](-128,110)(60,110)(248,110)
\put(164,174){\makebox(0,0)[l]{\scriptsize{$\mathrmbfit{T}_{1}$}}}
\put(114,140){\makebox(0,0)[r]{\scriptsize{$\mathrmbfit{S}_{1}^{\mathrm{op}}$}}}
\put(252,120){\makebox(0,0)[l]{\scriptsize{$\mathrmbfit{K}_{1}$}}}
\put(164,108){\makebox(0,0)[l]{\scriptsize{$\mathrmbfit{sign}^{\mathrm{op}}_{\mathcal{E}_{1}}$}}}
\put(192,60){\makebox(0,0)[l]{\shortstack{\scriptsize{$\mathrmbfit{key}_{\mathcal{E}_{1}}$}}}}
\put(-37,60){\makebox(0,0){\shortstack{\scriptsize{$\tau_{\mathcal{E}}$}\\\large{$\Rightarrow$}}}}
\put(157,60){\makebox(0,0){\shortstack{\scriptsize{$\;\;\tau_{\mathcal{E}_{1}}$}\\\large{$\Leftarrow$}}}}
\put(-80,140){\makebox(0,0){\footnotesize{$(\mathrmbf{Set}{\downarrow}\mathrmbfit{tup}_{\mathcal{E}_{2}})$}}}
\put(200,140){\makebox(0,0){\footnotesize{$(\mathrmbf{Set}{\downarrow}\mathrmbfit{tup}_{\mathcal{E}_{1}})$}}}
\put(0,80){\makebox(0,0){\footnotesize{$(\mathrmbf{Set}{\downarrow}X_{2})^{\mathrm{op}}$}}}
\put(124,80){\makebox(0,0){\footnotesize{$(\mathrmbf{Set}{\downarrow}X_{1})^{\mathrm{op}}$}}}
\put(60,5){\makebox(0,0){\footnotesize{$\mathrmbf{Set}$}}}
\put(67,90){\makebox(0,0){\scriptsize{$({\scriptstyle\sum}_{f})^{\mathrm{op}}$}}}
\put(24,38){\makebox(0,0)[r]{\scriptsize{$\mathrmbfit{tup}_{\mathcal{E}_{2}}$}}}
\put(97,38){\makebox(0,0)[l]{\scriptsize{$\mathrmbfit{tup}_{\mathcal{E}_{1}}$}}}
	\put(60,55){\makebox(0,0){\shortstack{\scriptsize{$\;\;\;{\tau}_{{\langle{f,g}\rangle}}$}\\\large{$\Leftarrow$}}}}
\qbezier[25](25,48)(60,48)(95,48)
\put(35,80){\vector(1,0){50}}
\put(9,68){\vector(3,-4){40}}
\put(111,68){\vector(-3,-4){40}}
\put(0,230){\makebox(0,0){\scriptsize{$r_{2}' \xrightarrow{p_{2}} r_{2}$}}}
\put(120,244){\makebox(0,0){\scriptsize{$\overset{\textstyle{r'_{1}}}{\overbrace{\mathrmbfit{F}(r'_{2})}}
\xrightarrow{\overset{\textstyle{p_{1}}}{\overbrace{\mathrmbfit{F}(p_{2})}}} 
\overset{\textstyle{r_{1}}}{\overbrace{\mathrmbfit{F}(r_{2})}}$}}}
\put(-290,120){\setlength{\unitlength}{0.45pt}
\begin{picture}(0,0)(0,0)
\put(-10,80){\makebox(0,0){\scriptsize{$\mathrmbfit{K}_{2}(r_{2})$}}}
\put(120,80){\makebox(0,0){\scriptsize{$\mathrmbfit{tup}_{\mathcal{E}_{2}}(I_{2},s_{2})$}}}
\put(-10,0){\makebox(0,0){\scriptsize{$\mathrmbfit{K}(r_{2}')$}}}
\put(120,0){\makebox(0,0){\scriptsize{$\mathrmbfit{tup}_{\mathcal{E}_{2}}(I'_{2},s'_{2})$}}}
\put(45,95){\makebox(0,0){\tiny{$\tau_{2}(r_{2})$}}}
\put(45,-15){\makebox(0,0){\tiny{$\tau_{2}(r_{2}')$}}}
\put(-8,40){\makebox(0,0)[r]{\tiny{$\mathrmbfit{K}_{2}(p_{2})$}}}
\put(130,40){\makebox(0,0)[l]{\tiny{$\mathrmbfit{tup}_{\mathcal{E}_{2}}(h_{2})$}}}
\put(0,65){\vector(0,-1){50}}
\put(120,65){\vector(0,-1){50}}
\put(25,80){\vector(1,0){35}}
\put(25,0){\vector(1,0){35}}
\end{picture}}
\put(310,120){\setlength{\unitlength}{0.45pt}
\begin{picture}(0,0)(0,0)
\put(-10,80){\makebox(0,0){\scriptsize{$\mathrmbfit{K}_{1}(r_{1})$}}}
\put(120,80){\makebox(0,0){\scriptsize{$\mathrmbfit{tup}_{\mathcal{E}_{1}}(I_{1},s_{1})$}}}
\put(-10,0){\makebox(0,0){\scriptsize{$\mathrmbfit{K}_{1}(r'_{1})$}}}
\put(120,0){\makebox(0,0){\scriptsize{$\mathrmbfit{tup}_{\mathcal{E}_{1}}(I'_{1},s'_{1})$}}}
\put(45,95){\makebox(0,0){\tiny{$\tau_{1}(r_{1})$}}}
\put(45,-15){\makebox(0,0){\tiny{$\tau_{1}(r'_{1})$}}}
\put(-8,40){\makebox(0,0)[r]{\tiny{$\mathrmbfit{K}_{1}(p_{1})$}}}
\put(130,40){\makebox(0,0)[l]{\tiny{$\mathrmbfit{tup}_{\mathcal{E}_{1}}(h_{1})$}}}
\put(0,65){\vector(0,-1){50}}
\put(120,65){\vector(0,-1){50}}
\put(25,80){\vector(1,0){40}}
\put(25,0){\vector(1,0){40}}
\end{picture}}
\put(-180,70){\makebox(0,0)[r]{\scriptsize{$\overset{\textstyle\mathrmbfit{S}_{2}(r_{2}')}{\overbrace{{\langle{I'_{2},s'_{2}}\rangle}}}\xrightarrow{\overset{\textstyle\mathrmbfit{S}_{2}(p_{2})}{\overbrace{h_{2}}}}\overset{\textstyle\mathrmbfit{S}_{2}(r_{2})}{\overbrace{{\langle{I_{2},s_{2}}\rangle}}}$}}}
\put(295,70){\makebox(0,0)[l]{\scriptsize{$\overset{\textstyle\mathrmbfit{S}_{1}(r'_{1})}{\overbrace{{\langle{I'_{1},s'_{1}}\rangle}}}\xrightarrow{\overset{\textstyle\mathrmbfit{S}_{1}(p_{1})}{\overbrace{h_{1}}}}\overset{\textstyle\mathrmbfit{S}_{1}(r_{1})}{\overbrace{{\langle{I_{1},s_{1}}\rangle}}}$}}}
\put(-260,15){\setlength{\unitlength}{0.45pt}\begin{picture}(0,0)(0,0)
\put(60,20){\makebox(0,0){\scriptsize{$\mathrmbfit{S}_{2} \circ {\scriptstyle\sum}_{f} \stackrel{\theta}{\Rightarrow} \mathrmbfit{F} \circ \mathrmbfit{S}_{1}$}}}
\put(60,-10){\makebox(0,0){\scriptsize{${\scriptstyle\sum}_{f}(I_{2},s_{2})={\langle{I_{2},s_{2}{\cdot}f}\rangle}\xrightarrow{{\theta}_{r_{2}}}{\langle{I_{1},s_{1}}\rangle}$}}}
\put(60,-40){\makebox(0,0){\scriptsize{$\mathrmbfit{tup}_{\mathcal{E}_{1}}(I_{1},s_{1})\xrightarrow{\theta_{r_{2}}{\cdot}{(\mbox{-})}}\mathrmbfit{tup}_{\mathcal{E}_{1}}(I_{2},s_{2}{\cdot}f)$}}}
\end{picture}}
\end{picture}
\\ \\ 
\\
{\scriptsize $\kappa \bullet \tau_{2} = 
\mathrmbfit{F}^{\mathrm{op}}\tau_{1} 
\bullet \theta^{\mathrm{op}}\mathrmbfit{tup}_{\mathcal{E}_{1}} 
\bullet \mathrmbfit{S}_{2}^{\mathrm{op}} \tau_{{\langle{f,g}\rangle}}$}
\\
{\scriptsize $\mathrmbfit{K}_{1}(r_{1}) 
\xrightarrow{\kappa_{r_{2}}} \mathrmbfit{K}_{2}(r_{2}) 
\xrightarrow{\tau_{2}(r_{2})} \mathrmbfit{tup}_{\mathcal{E}_{2}}(I_{2},s_{2}) =
\mathrmbfit{K}(r_{1}) 
\xrightarrow{\tau_{1}(r_{1})}
\mathrmbfit{tup}_{\mathcal{E}_{1}}(I_{1},s_{1}) 
\xrightarrow{
\overset{\mathrmbfit{tup}_{\mathcal{E}_{1}}(\theta_{r_{2}})}
{\theta_{r_{2}}{\cdot}{(\mbox{-})}}
}
\mathrmbfit{tup}_{\mathcal{E}_{1}}(I_{2},s_{2}{\cdot}f)
\xrightarrow{
\overset{\tau_{{\langle{f,g}\rangle}}(I_{2},s_{2})}
{{(\mbox{-})} \cdot g}
}
\mathrmbfit{tup}_{\mathcal{E}_{2}}(I_{2},s_{2})
$}
\\
\\ \\
{\scriptsize \begin{tabular}{p{260pt}}
This figure illustrates the condition on relational database morphisms.
It has been annotated to help guide the understanding.
The condition is symbolically stated in the two lines of text just above.
The top line states the condition in terms of natural transformations.
The bottom line states the condition in terms of set functions
on the $r_{2}^{\mathrm{th}}$ component for some source relation type $r_{2} \in \mathrmbf{R}_{2}$.
The large diagram in the center illustrates the condition.
The two upper diagrams give alternate views of this.
The top right diagram is in a form very much like a table morphism.
This is appropriate,
since a relational database morphism between single table databases is just a table morphism.
Finally,
we have illustrated the effect of the morphism on the source/target tables,
starting with a source relational constraint (morphism) $r_{2}' \xrightarrow{p} r_{2}$.
\end{tabular}}
\end{tabular}
\end{center}
\caption{Relational Database Morphism}
\label{relational:database:morphism}
\end{figure}
%


%
\begin{proposition}
There is a diagram functor
$\mathrmbf{Db} \xrightarrow{\mathrmbfit{dgm}} \left(\mathrmbf{Cat}{\,\Downarrow\,}\mathrmbf{Tbl}\right)$ 
from databases
to (the lax comma category of) diagrams of tables.
\end{proposition}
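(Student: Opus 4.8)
The plan is to read $\mathrmbfit{dgm}$ off directly from the two equivalent descriptions of a relational database, and then to reduce functoriality to the component-wise conditions already recorded in the definitions and in Figure~\ref{relational:database:morphism}.

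\emph{On objects.} Given $\mathcal{D} = \langle\mathcal{S},\mathcal{E},\mathrmbfit{K},\tau\rangle$ with schema $\mathcal{S} = \langle\mathrmbf{R},X,\mathrmbfit{S}\rangle$, I would set $\mathrmbfit{dgm}(\mathcal{D}) = \langle\mathrmbf{R}^{\mathrm{op}},\mathrmbfit{T}\rangle$, where $\mathrmbfit{T}$ is the table functor $\mathrmbf{R}^{\mathrm{op}} \to (\mathrmbf{Set}{\downarrow}\mathrmbfit{tup}_{\mathcal{E}})$ attached to $\mathcal{D}$ in the definition, post-composed with the fiber inclusion $\mathrmbf{Tbl}(\mathcal{E}) = (\mathrmbf{Set}{\downarrow}\mathrmbfit{tup}_{\mathcal{E}}) \hookrightarrow \mathrmbf{Tbl}$. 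Concretely, $\mathrmbfit{T}$ sends a relation type $r$ to the table $\langle\langle X,\mathrmbfit{S}(r)\rangle,\mathcal{E},\mathrmbfit{K}(r),\tau_{r}\rangle$ and a constraint $p : r' \to r$ in $\mathrmbf{R}$ to the table morphism $\langle\mathrmbfit{S}(p),\mathrmit{id}_{X},\mathrmit{id}_{Y},\mathrmbfit{K}(p)\rangle : \mathrmbfit{T}(r) \to \mathrmbfit{T}(r')$; the condition on table morphisms for this datum is exactly the $(r',r)$-instance of the naturality of $\tau : \mathrmbfit{K} \Rightarrow \mathrmbfit{S}^{\mathrm{op}}\circ\mathrmbfit{tup}_{\mathcal{E}}$, and functoriality in $p$ is inherited from functoriality of $\mathrmbfit{S}$ and $\mathrmbfit{K}$. (This is the same bookkeeping by which a single table becomes a one-object database.)

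\emph{On morphisms.} Given $\langle\mathrmbfit{F},\theta,f,g,\kappa\rangle : \mathcal{D}_{2} \to \mathcal{D}_{1}$, I would send it to $\langle\mathrmbfit{F}^{\mathrm{op}},\alpha\rangle$, where $\mathrmbfit{F}^{\mathrm{op}} : \mathrmbf{R}_{2}^{\mathrm{op}} \to \mathrmbf{R}_{1}^{\mathrm{op}}$ is the index functor and $\alpha : \mathrmbfit{F}^{\mathrm{op}}\circ\mathrmbfit{T}_{1} \Rightarrow \mathrmbfit{T}_{2}$ is the $2$-cell whose $r_{2}$-component is the table morphism $\langle\theta_{r_{2}},f,g,\kappa_{r_{2}}\rangle : \mathrmbfit{T}_{1}(\mathrmbfit{F}(r_{2})) \to \mathrmbfit{T}_{2}(r_{2})$. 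Here $\theta_{r_{2}} : {\scriptstyle\sum}_{f}\mathrmbfit{S}_{2}(r_{2}) \to \mathrmbfit{S}_{1}(\mathrmbfit{F}(r_{2}))$ is the $r_{2}$-component of the signature transformation $\theta$, which is precisely the schema-morphism datum a table morphism requires, and $\kappa_{r_{2}} : \mathrmbfit{K}_{1}(\mathrmbfit{F}(r_{2})) \to \mathrmbfit{K}_{2}(r_{2})$ is the $r_{2}$-component of $\kappa$. The assertion that each $\langle\theta_{r_{2}},f,g,\kappa_{r_{2}}\rangle$ satisfies the table-morphism condition $\kappa_{r_{2}}\cdot\tau_{2}(r_{2}) = \tau_{1}(\mathrmbfit{F}(r_{2}))\cdot\mathrmbfit{tup}(\theta_{r_{2}},f,g)$ is exactly the $r_{2}$-component of the relational database morphism condition, displayed component-wise in Figure~\ref{relational:database:morphism}. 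The direction of $\alpha$ is chosen to match the convention already in force for the key functor $\mathrmbf{Db} \xrightarrow{\mathrmbfit{key}} (\mathrmbf{Cat}{\,\Downarrow\,}\mathrmbf{Set})$.

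\emph{Functoriality, and the obstacle.} Three checks remain. First, $\alpha$ must be natural: for a constraint $p : r_{2}' \to r_{2}$ in $\mathrmbf{R}_{2}$ one verifies that the square of table morphisms built from $\mathrmbfit{T}_{2}(p)$, $\mathrmbfit{T}_{1}(\mathrmbfit{F}(p))$, $\alpha_{r_{2}}$, $\alpha_{r_{2}'}$ commutes. Since composition of table morphisms is component-wise, this square splits into a schema part, an entity-classification part, and a key part: the entity-classification part is trivial (all four corners carry the same infomorphism $\langle f,g\rangle$); the key part is the naturality square of $\kappa$ together with functoriality of $\mathrmbfit{K}_{1}$, $\mathrmbfit{K}_{2}$, $\mathrmbfit{F}$; and the schema part is the naturality square of $\theta$ together with functoriality of $\mathrmbfit{S}_{1}$, $\mathrmbfit{S}_{2}$, $\mathrmbfit{F}$. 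Second, identity database morphisms have $\mathrmbfit{F} = \mathrmit{id}$, $\theta = \mathrmit{1}$, $f = g = \mathrmit{id}$, $\kappa = \mathrmit{1}$, so $\mathrmbfit{dgm}$ returns the identity. Third, because relational database morphisms compose component-wise, the induced $2$-cells compose by the lax composition rule of $\left(\mathrmbf{Cat}{\,\Downarrow\,}\mathrmbf{Tbl}\right)$, which is checked again component by component. The hard part is the first step: translating ``commuting square of table morphisms'' into its three components and, in particular, seeing that the schema component is forced by naturality of $\theta$ — this is where the interplay of the summation functors ${\scriptstyle\sum}_{f}$, the opposite categories, and the comma-category structure of the schema component has to be handled with care. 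Once that bookkeeping is arranged, everything else is routine transport of the already-established component-wise identities.
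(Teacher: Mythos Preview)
Your construction is correct and is precisely what the paper intends: the paper states this proposition without proof, and Figure~\ref{relational:database:morphism} already displays the table functors $\mathrmbfit{T}_{1}$, $\mathrmbfit{T}_{2}$ together with the component-wise identity $\kappa_{r_{2}}\cdot\tau_{2}(r_{2}) = \tau_{1}(r_{1})\cdot\mathrmbfit{tup}_{\mathcal{E}_{1}}(\theta_{r_{2}})\cdot\tau_{{\langle{f,g}\rangle}}(I_{2},s_{2})$ that you invoke. You have simply written out what the paper leaves implicit, including the three-component decomposition of the naturality square for $\alpha$ and the check that the lax $2$-cell direction matches the $\Downarrow$-convention already used for $\mathrmbfit{key}$.
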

%
%
Recall that the limit operation is a functor
${\left(\mathrmbf{Cat}{\,\Downarrow\,}\mathrmbf{Tbl}\right)}^{\mathrm{op}} \xrightarrow{\mathrmbfit{lim}} \mathrmbf{Tbl}$.
\begin{definition}
The join functor is defined to be the composition 
{\footnotesize\[
\mathrmbfit{join} = \mathrmbfit{dgm}^{\mathrm{op}} \circ \mathrmbfit{lim} :
{\mathrmbf{Db}}^{\mathrm{op}} \rightarrow \mathrmbf{Tbl}.
\]\normalsize}
\end{definition}
%
%
\begin{corollary}
The schema of the join of a database is the
reference (colimit) of the underlying database schema. 
\begin{center}
\begin{tabular}{c}
\\
\setlength{\unitlength}{0.55pt}
\begin{picture}(120,100)(80,10)
\put(119,120){\makebox(0,0){\footnotesize{${\left(\mathrmbf{Cat}{\,\Downarrow\,}\mathrmbf{Tbl}\right)}$}}}
\put(50,110){\makebox(0,0)[r]{\scriptsize{$\mathrmbfit{dgm}$}}}
\put(190,110){\makebox(0,0)[l]{\scriptsize{$\mathrmbfit{lim}^{\mathrm{op}}$}}}
\put(20,90){\vector(3,1){54}}
\put(166,110){\vector(3,-1){54}}
%
\put(125,40){\makebox(0,0){\footnotesize{${\left(\mathrmbf{Cat}{\,\Uparrow\,}\mathrmbf{Sch}\right)}$}}}
\put(50,30){\makebox(0,0)[r]{\scriptsize{$\mathrmbfit{dgm}$}}}
\put(190,30){\makebox(0,0)[l]{\scriptsize{$\mathrmbfit{colim}$}}}
\put(129,80){\makebox(0,0)[l]{\scriptsize{$\widetilde{\mathrmbfit{sch}}$}}}
\put(20,10){\vector(3,1){54}}
\put(166,30){\vector(3,-1){54}}
\put(120,105){\vector(0,-1){50}}
\put(0,80){\makebox(0,0){\footnotesize{$\mathrmbf{Db}$}}}
\put(244,80){\makebox(0,0){\footnotesize{$\mathrmbf{Tbl}^{\mathrm{op}}$}}}
\put(0,0){\makebox(0,0){\footnotesize{$\mathrmbf{Dbs}$}}}
\put(240,0){\makebox(0,0){\footnotesize{$\mathrmbf{Sch}$}}}
\put(80,88){\makebox(0,0){\scriptsize{$\mathrmbfit{join}^{\mathrm{op}}$}}}
\put(120,-12){\makebox(0,0){\scriptsize{$\mathrmbfit{refer}$}}}
\put(-8,40){\makebox(0,0)[r]{\scriptsize{$\mathrmbfit{dbs}$}}}
\put(248,40){\makebox(0,0)[l]{\scriptsize{$\mathrmbfit{sch}$}}}
%
\put(25,80){\line(1,0){90}}
\put(155,80){\vector(1,0){60}}
\qbezier[20](115,80)(135,80)(155,80)
\put(25,0){\vector(1,0){190}}
\put(0,65){\vector(0,-1){50}}
\put(240,65){\vector(0,-1){50}}
%
%
\end{picture}
\\ \\
\end{tabular}
\end{center}
\end{corollary}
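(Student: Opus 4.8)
The plan is to unwind the definition $\mathrmbfit{join} = \mathrmbfit{dgm}^{\mathrm{op}} \circ \mathrmbfit{lim}$ and reduce the corollary to the behaviour, already in hand, of the signature projection of a fibre category $\mathrmbf{Tbl}(\mathcal{E})$ on limits. Fix a relational database $\mathcal{D} = \langle \mathcal{S}, \mathcal{E}, \mathrmbfit{K}, \tau \rangle$ with underlying schema $\mathcal{S} = \langle \mathrmbf{R}, X, \mathrmbfit{S} \rangle$. As an object of $(\mathrmbf{Cat}{\,\Downarrow\,}\mathrmbf{Tbl})$, $\mathrmbfit{dgm}(\mathcal{D})$ is the table functor $\mathrmbfit{T} : \mathrmbf{R}^{\mathrm{op}} \rightarrow \mathrmbf{Tbl}$, all of whose values carry the classification $\mathcal{E}$ and whose pointwise schemas are $\mathrmbfit{sch}\bigl(\mathrmbfit{T}(r)\bigr) = \langle X, \mathrmbfit{S}(r) \rangle$ --- constant type set $X$, signature component the signature functor $\mathrmbfit{S} : \mathrmbf{R} \rightarrow (\mathrmbf{Set}{\downarrow}X)$. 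On the other side $\mathrmbfit{dbs}(\mathcal{D}) = \mathcal{S}$, and by the definition of the reference schema $\mathrmbfit{refer}(\mathcal{S}) = \langle X, \coprod\mathrmbfit{S} \rangle$, where $\coprod\mathrmbfit{S}$ is the colimit of $\mathrmbfit{S}$ in $(\mathrmbf{Set}{\downarrow}X)$. So the corollary is the assertion that $\mathrmbfit{sch}\bigl(\mathrmbfit{join}(\mathcal{D})\bigr) \cong \langle X, \coprod\mathrmbfit{S} \rangle$, naturally in $\mathcal{D}$.

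The heart of the matter is the computation of $\mathrmbfit{join}(\mathcal{D}) = \mathrmbfit{lim}\bigl(\mathrmbfit{dgm}(\mathcal{D})\bigr)$, the limit of $\mathrmbfit{T}$ in $\mathrmbf{Tbl}$. Because $\mathcal{D}$ is a \emph{naturally connected} diagram of tables, the indexing category $\mathrmbf{R}$ (hence $\mathrmbf{R}^{\mathrm{op}}$) is connected; so in the classification-indexed description of limits in $\mathrmbf{Tbl}$ supplied by its completeness --- the theorem of Tarlecki, Burstall and Goguen~\cite{tarlecki:burstall:goguen:91} applied to $\mathrmbfit{tbl} : \mathrmbf{Cls}^{\mathrm{op}} \rightarrow \mathrmbf{Cat}$ --- the classification component of the limit, being the limit in $\mathrmbf{Cls}$ of the constant diagram at $\mathcal{E}$ over the connected category $\mathrmbf{R}^{\mathrm{op}}$, is $\mathcal{E}$ itself. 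Thus $\mathrmbfit{join}(\mathcal{D})$ lies in the fibre $\mathrmbf{Tbl}(\mathcal{E}) = (\mathrmbf{Set}{\downarrow}\mathrmbfit{tup}_{\mathcal{E}})$ and is the limit of $\mathrmbfit{T}$ computed there. Now I apply the signature projection $\mathrmbfit{sign}_{\mathcal{E}} : \mathrmbf{Tbl}(\mathcal{E})^{\mathrm{op}} \rightarrow (\mathrmbf{Set}{\downarrow}X)$, which is cocontinuous by the proposition that $\mathrmbf{Tbl}(\mathcal{E})$ is complete with cocontinuous signature projection: a limit in $\mathrmbf{Tbl}(\mathcal{E})$ is a colimit in $\mathrmbf{Tbl}(\mathcal{E})^{\mathrm{op}}$, so $\mathrmbfit{sign}_{\mathcal{E}}$ sends it to the colimit $\coprod\mathrmbfit{S}$ of the pointwise signatures $\mathrmbfit{S}(r)$ in $(\mathrmbf{Set}{\downarrow}X)$. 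Since the type set stays $X$ throughout, $\mathrmbfit{sch}\bigl(\mathrmbfit{join}(\mathcal{D})\bigr) = \langle X, \coprod\mathrmbfit{S} \rangle = \mathrmbfit{refer}\bigl(\mathrmbfit{dbs}(\mathcal{D})\bigr)$.

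To recover the statement in the fully diagrammatic form drawn above, I would assemble it from three routine identities, one per triangle (written in the paper's diagrammatic composition order). First, the schema functor lifts pointwise to $\widetilde{\mathrmbfit{sch}} : (\mathrmbf{Cat}{\,\Downarrow\,}\mathrmbf{Tbl}) \rightarrow (\mathrmbf{Cat}{\,\Uparrow\,}\mathrmbf{Sch})$, carrying a diagram of tables to the diagram of its pointwise schemas, and by direct inspection of $\mathrmbfit{dgm}$ one has $\mathrmbfit{dgm} \circ \widetilde{\mathrmbfit{sch}} = \mathrmbfit{dbs} \circ \mathrmbfit{dgm}$. Second, $\mathrmbfit{dgm} \circ \mathrmbfit{colim} = \mathrmbfit{refer}$ is essentially the definition of the reference schema once one notes --- using that ${\scriptstyle\sum}_{f}$ is a left adjoint, hence cocontinuous --- that the colimit in $\mathrmbf{Sch}$ of the identity-on-$X$ diagram $r \mapsto \langle X, \mathrmbfit{S}(r) \rangle$ is $\langle X, \coprod\mathrmbfit{S} \rangle$. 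Third, $\mathrmbfit{lim}^{\mathrm{op}} \circ \mathrmbfit{sch} = \widetilde{\mathrmbfit{sch}} \circ \mathrmbfit{colim}$ is precisely the cocontinuity exhibited in the previous paragraph. Composing the three triangles along $\mathrmbfit{dgm}$ yields the bottom edge $\mathrmbfit{join}^{\mathrm{op}} \circ \mathrmbfit{sch} = \mathrmbfit{dbs} \circ \mathrmbfit{refer}$, with naturality in $\mathcal{D}$ coming for free from the functoriality of $\mathrmbfit{lim}$, $\mathrmbfit{colim}$ and $\mathrmbfit{dgm}$.

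The step I expect to be the real obstacle is the passage between the ambient category $\mathrmbf{Tbl}$ and the fibre $\mathrmbf{Tbl}(\mathcal{E})$. The limit of several $\mathcal{E}$-tables computed in $\mathrmbf{Tbl}$ need not itself be an $\mathcal{E}$-table --- its classification component is a limit of copies of $\mathcal{E}$ in $\mathrmbf{Cls}$, not $\mathcal{E}$ --- so one genuinely needs the connectedness of $\mathrmbf{R}$, part of the ``naturally connected'' hypothesis on databases, to know that the join stays over $\mathcal{E}$ and that the cocontinuous fibre projection $\mathrmbfit{sign}_{\mathcal{E}}$ is the relevant tool. The only other point that calls for care is the bookkeeping of variance: $\mathrmbfit{sch}$ and $\mathrmbfit{cls}$ are contravariant on $\mathrmbf{Tbl}$, so ``$\mathrmbfit{sch}$ preserves limits of tables'' has to be read as ``$\mathrmbfit{sch}$ sends them to colimits of schemas'' (and similarly for classifications); once the $\mathrm{op}$'s are pinned down, everything else is assembling results already proven.
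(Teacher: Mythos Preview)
The paper states this corollary without proof---the commutative diagram is offered as the entire justification, with each cell presumably following from earlier results---so there is no paper argument to compare against. Your proof supplies what the paper omits, and its architecture (decompose into the three cells of the diagram, then invoke the cocontinuity of $\mathrmbfit{sign}_{\mathcal{E}}$ on the fibre) is the natural one and matches the evident intent of the diagram.

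Your identification of the connectedness of $\mathrmbf{R}$ as the crux is correct, and in fact the issue is slightly more pervasive than you note: it is needed not only for the right-hand square (to keep the $\mathrmbf{Tbl}$-limit in the fibre $\mathrmbf{Tbl}(\mathcal{E})$) but also for the bottom triangle, since the colimit in $\mathrmbf{Sch}$ of the constant-on-$X$ diagram $r \mapsto \langle X,\mathrmbfit{S}(r)\rangle$ has type component $\mathrm{colim}_{\mathrmbf{R}}\,X$, which is $X$ only when $\mathrmbf{R}$ is connected. Without connectedness both sides of the outer rectangle acquire type set $X^{\pi_0(\mathrmbf{R})}$ rather than $X$, and the footnote defining the reference schema as $\langle X,\coprod\mathrmbfit{S}\rangle$ would no longer agree with $\mathrmbfit{dgm}\circ\mathrmbfit{colim}$. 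So your reading of ``naturally connected'' as entailing that $\mathrmbf{R}$ is a connected category is the only one under which the corollary holds as written; the paper's phrasing is ambiguous, but you have correctly isolated the hypothesis that makes the statement true.
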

In any complete category,
the limits of arbirary diagrams
can be constructed by using only 
the terminal object and (binary) pullbacks.
Dually,
in any cocomplete category,
the colimits of arbirary diagrams
can be constructed by using only the initial object and (binary) pushouts.
As we have shown,
for any entity classification $\mathcal{E} = {\langle{X,Y,\models_{\mathcal{E}}}\rangle}$,
the category of $\mathcal{E}$-tables $\mathrmbf{Cat}(\mathcal{E})$ is complete.
Hence,
for any database schema $\mathcal{S}$
the join of arbitrary $\mathcal{S}$-databases 
can be constructed by using only 
the join of the empty database (the terminal $\mathcal{E}$-table) and 
the join of $\mathcal{E}$-databases with binary span $X$-schemas 
(two $\mathcal{E}$-tables connected through a third).
%

\section{Summary and Future Work}\label{summary:future:work}

We have define the semantics for the relational database logical environment,
which can be used to specify database system consequence.
This provides interpretations for various formalisms 
such as relational algebra and first order logic,
where terms and equations can be included
by replacing signature morphisms with (possibly quotiented) term-tuples.
The two most important acheivements of this paper are
the definition of a natural and general category of tables that is both complete and cocomplete,
and the definition of a morphism of databases with some very nice properties.
We have extended the notion of tables (Spivak~\cite{spivak:sd}),
first from an underlying entity type specification to an entity classification (models multi-inheritance),
second from the static case of an underlying entity classification
to the dynamic case of tables moving along an underlying entity infomorphism.
We have proven completeness and cocompleteness for this (larger) category of tables.
Completeness allows joins over arbitrary collections of tables that are possibly linked by projections.
This includes selection,
which is the join with respect to reference relations (tables). 
Cocompleteness allows a distributed union that is new.

However,
much work needs to be done.
We need to
investigate further properties of database morphisms, including continuity.
In a follow-up paper we will
develop various formalisms, such as relational algebra and first order logic, and define views and queries.
This will deepen the connection with the Information Flow Framework.
Functional dependencies and normal forms should be expressed in terms of the categorical structure.
For practical database maintenance,
modifications (insertion, deletion and update) need to be defined.
The unified form (plus its graphical representation) needs further development. 
And finally,
the theory of databases defined in this paper should be more closely compared and contrasted with other approachs, 
such as the simplicial database approach (Spivak~\cite{spivak:sd},\cite{spivak:fdm})
and the sketch approach (Johnson, Rosebrugh et al~\cite{johnson:rosebrugh:07}).

\end{document}